\newtheorem{definition}{Definition} 
\newtheorem{theorem}{Theorem} 
\newtheorem{lemma}{Lemma} 
\newtheorem{corollary}{Corollary} 
\newtheorem{proposition}{Proposition} 
\newtheorem{rem}{Remark} 
\Crefname{ALC@unique}{Line}{Lines}
\colorlet{texcscolor}{blue!50!black}
\colorlet{texemcolor}{red!70!black}
\colorlet{texpreamble}{red!70!black}
\colorlet{codebackground}{black!25!white!25}
\lstdefinestyle{siamlatex}{%
  style=tcblatex,
  texcsstyle=*\color{texcscolor},
  texcsstyle=[2]\color{texemcolor},
  keywordstyle=[2]\color{texemcolor},
  moretexcs={cref,Cref,maketitle,mathcal,text,headers,email,url},
}
\DeclareTotalTCBox{\code}{ v O{} }
{ 
  fontupper=\ttfamily\color{black},
  nobeforeafter,
  tcbox raise base,
  colback=codebackground,colframe=white,
  top=0pt,bottom=0pt,left=0mm,right=0mm,
  leftrule=0pt,rightrule=0pt,toprule=0mm,bottomrule=0mm,
  boxsep=0.5mm,
  #2}{#1}
\patchcmd\newpage{\vfil}{}{}{}
\begin{document}
	\title{Stochastic Non-preemptive Co-flow Scheduling with Time-Indexed Relaxation}
	\author{Ruijiu Mao, Vaneet Aggarwal, and Mung Chiang \thanks{The authors are with Purdue University, West Lafayette IN 47907 (email: \{mao95, vaneet, chiang\}@purdue.edu). This paper will be presented in part at the IEEE Infocom Workshop on Big Data in Cloud Performance (DCPerf), April 2018 \cite{rmao2018infwrk}. }}
	
\maketitle



\begin{abstract}
Co-flows model a modern scheduling setting that is commonly found in a variety of applications in distributed and cloud computing. A stochastic co-flow task contains a set of parallel flows with randomly distributed sizes. Further, many applications require non-preemptive scheduling of co-flow tasks. This paper gives an approximation algorithm for stochastic non-preemptive co-flow scheduling. The proposed approach uses a time-indexed linear relaxation, and uses its solution to come up with a feasible schedule. This algorithm is shown to achieve a competitive ratio of $(2\log{m}+1)(1+\sqrt{m}\Delta)(1+m{\Delta}){(3+\Delta)}/{2}$ for zero-release times, and $(2\log{m}+1)(1+\sqrt{m}\Delta)(1+m\Delta)(2+\Delta)$ for general release times, where $\Delta$ represents the upper bound of squared coefficient of variation of processing times, and $m$ is the number of servers.


\end{abstract}

\begin{IEEEkeywords}
Co-flow scheduling, stochastic flow size, non-preemptive scheduling, time-indexed relaxation, input-queued switch. 
\end{IEEEkeywords}


\section{Introduction}
\label{sec:intro}

Computation frameworks  such as MapReduce \cite{Dean:2008:MSD:1327452.1327492}, Hadoop \cite{shvachko2010hadoop}, Spark \cite{spark}, and Google Dataflow \cite{google} are growing at an unprecedented speed. These frameworks enable users to offload computation to the cloud. In order for cloud service provider to maintain efficient services, they need to schedule the different jobs so as to minimize the completion time of the jobs. One of the key challenges in cloud computing is the data transmission across machines \cite{chowdhury2011managing}, which typically happens during the shuffle phase in the MapReduce based computations. In this paper, we will provide algorithms to reduce this communication time for different flows required in each of computational jobs.

Scheduling for shuffle phase is studied in the literature as co-flow scheduling \cite{Chowdhury:2012:CNA:2390231.2390237}. In this framework, a flow consists of data transfer between two servers. A co-flow task consists of multiple flows. A co-flow task is complete when all these flows are complete. Co-flow scheduling problem aims to schedule multiple co-flow tasks so that the weighted completion time of the co-flow tasks is minimized.  In most realistic big data computing jobs, the size of the flows to be transfered is not deterministic.  The authors of \cite{lawler1989sequencing} provide an overview of various scheduling problems with random parameters. For instance, processing times can be regarded as independent random variables drawn from given probability distributions.   Further, many scenarios do not allow for stopping a transfer once started \cite{kavi2001scheduled} and thus non-preemptive scheduling strategies are important. The key reasons for practicality of non-preemption include additional signaling overhead, flow switching latency, packet drops, and limitations on hardware. Thus, this paper considers non-preemptive stochastic scheduling of multiple co-flow tasks.




Aiming to reduce weighted completion time of tasks, this paper proposes a co-flow scheduling algorithm to order each constituent flow of each co-flow task with a random data size on each link. The non-preemptive constraint implies that  each flow occupies whole capacity of its source and sink machines, namely one unit per time slot, until the completion of the flow. Stochastic non-preemptive co-flow model provides flexibility and efficiency based on parallelism: constituent flows from several co-flows might be processed at same time. The problem even with deterministic flow sizes is NP-hard. The authors of \cite{yu2016non} considered a non-preemptive co-flow problem, where the different links were assigned bandwidth and thus the links were independent. In the standard co-flow problem \cite{Qiu:2015:MTW:2755573.2755592,shafiee2017scheduling, DBLP:journals/corr/ImP17} that is also considered in this paper, there are flow constraints at the source and the sink ends.



We note that the problem of scheduling preemptive co-flows with deterministic flow time has been considered in \cite{Qiu:2015:MTW:2755573.2755592,shafiee2017scheduling, DBLP:journals/corr/ImP17}, where $O(1)$-approximation algorithms are provided for minimizing the weighted completion time. However, there are no corresponding results for scheduling non-preemptive co-flows. This is the first paper, to the best of our knowledge, that provides approximation guarantees for the non-preemptive co-flow scheduling problem. Let $\Delta$ represents the upper bound of squared coefficient of variation of processing times and $m$ is the number of servers.
Then, the algorithm proposed in this paper is an $(2\log{m}+1)(1+\sqrt{m}\Delta)(1+m{\Delta}){(3+\Delta)}/{2}$ approximation algorithm for zero release time, and $(2\log{m}+1)(1+\sqrt{m}\Delta)(1+m{\Delta})(2+\Delta)$ approximation algorithm for general release times.

The proposed algorithm uses a time-slotted model, where the processing times are integer multiple of the length of the time slot. A linear programming (LP) based relaxation algorithm is formulated, that has variable of probability distribution of start of co-flow on each link, thus providing the average completion time of each co-flow. Since this is only a relaxation, the schedule may not be feasible. Based on weighted shortest expected processing time algorithm (WSEPT), an optimal rule for single-machine scheduling with stochastic processing time \cite{rothkopf1966scheduling}, we generate tentative start time for every constituent flow suggested by the LP solution, and group the flows by their tentative start time. Further,  a  grouping of coflows (originally used for input-queued switches \cite{keslassy2003guaranteed}) is used which provides groups of co-flows which could be scheduled simultaneously since they have no conflicts.  Scheduling  these groups in order gives the proposed algorithm.




The main contributions of the paper can be summarized as follows.

\begin{enumerate}
	\item This paper is the first paper, to the best of our knowledge, that gives approximation guarantees for stochastic non-preemptive co-flow scheduling. The results have been provided both with zero release times, and general release times.
	\item As a special case of stochastic, $\Delta=0$ gives the results for deterministic non-preemptive co-flow scheduling. These are also the first approximation results, to the best of our knowledge, for this case, where the approximation guarantees are  $3/2(2\log{m}+1)$ approximation algorithm for zero release times, and $2(2\log{m}+1)$ approximation algorithm for general release times.
\end{enumerate}

The rest of this paper is organized as follows. Section \ref{sec:related} introduces the related work of co-flow scheduling and input-queued switches. Section \ref{sec:model} presents the formulation of the problem. In Section \ref{sec:approach}, the proposed algorithm is provided. Section  \ref{sec:guarantees}  proves the approximation bounds of the proposed algorithm. Section \ref{sec:release} extends the algorithm and the results to general release times.  Finally, Section \ref{sec:con} concludes the paper. 


\section{Related Work}
\label{sec:related}

In this section, we will describe the related work for this paper in three categories of co-flow scheduling, input-queued switch, and stochastic scheduling on parallel machines.

\subsection{Co-flow Scheduling}
The concept of co-flow was first proposed in 2012 by Chowdhury \cite{Chowdhury:2012:CNA:2390231.2390237}. Further, the authors of \cite{Chowdhury:2014:ECS:2740070.2626315}  proposed an efficient  implementation of co-flow scheduling. However, these works did not provide approximation guarantees for the proposed algorithm.  


Polynomial-time  approximation algorithms have been proposed for deterministic preemptive co-flow scheduling in \cite{Qiu:2015:MTW:2755573.2755592,shafiee2017scheduling,DBLP:journals/corr/ImP17}. The authors of \cite{Qiu:2015:MTW:2755573.2755592} use a relaxed linear program followed by a Birkhoff-von-Neumann (BV) decomposition to schedule flows. In contrast, this paper considers non-preemptive scheduling, and has stochastic task sizes. 


Non-preemptive deterministic co-flow scheduling has also been studied in \cite{yu2016non}. However, the authors assumed fixed bandwidth links between every pair of servers. In contrast, this paper considers bandwidth constraints at source and sink which is typical for co-flow scheduling literature.


\subsection{Input-queued switch}
To decrease the frequency of matching computation for crossbar configuration in high-speed core routers, frame-based scheduling has being widely studied \cite{mckeown1999achieving,mckeown1999islip}. One of the essential steps of frame-based scheduling is the computation of a list of input/output pair. The pairing step is similar to co-flow scheduling in the sense of grouping constituent flows to be processed in the same time slot. The listing step is similar to co-flow scheduling in the sense of sequencing the groups flows at each time slot. The authors of \cite{keslassy2003guaranteed} introduced Greedy Low Jitter Decomposition (GLJD) Algorithm to solve list pairing scheduling problem. The GLJD algorithm can be seen in non-preemptive scheduling as an algorithm that has a similar goal as the BV decomposition for preemptive scheduling in \cite{Qiu:2015:MTW:2755573.2755592}. 


\subsection{Stochastic scheduling on parallel machines}
For  scheduling of jobs, the size of tasks is unknown apriori. Thus, introducing randomness in the task sizes is a natural abstraction. In 1966, the authors of \cite{rothkopf1966scheduling} proved that WSEPT rule (weighted shortest expected processing time first) is optimal for minimizing weighted completion time of jobs on a single server. Based on WSEPT rule, the authors of \cite{skutella2016unrelated} studied the stochastic variant of unrelated parallel machine scheduling. This approach uses a time-indexed linear programming relaxation for stochastic machine scheduling, whose ideas have been used in this paper to formulate a time-indexed linear relaxation for stochastic co-flow scheduling.


\section{Problem Formulation}
\label{sec:model}

In this section, we will describe the problem of Stochastic Non-preemptive Co-flow scheduling. Consider set of $m$ servers given as ${\cal M} = \{1, 2, \cdots, m\}$. A flow represents a data communication between a server $i\in {\cal M}$ and a server $j\in {\cal M}$. Each co-flow task is composed of multiple flows. Let there be $N$ co-flow tasks, where co-flow task $k$ is indexed by a set ${\cal T}_k$. This set ${\cal T}_k$ is composed of a set of flows represented by a set of $(i,j,k)$, where $i, j \in {\cal M}$, and $k\in \{1, \cdots, N\}$. For each such flow, the size of the flow is characterized by  $S_{i,j,k}$, which is the random variable indicating the size of flow (or data transfer) from $i$ to $j$. More, formally, a co-flow task is defined as follows. 

\begin{definition}
$k$-th co-flow task is defined as a set of flows, or ${\cal T}_k \subseteq \{(i,j, k): i, j \in {\cal M}, k\in \{1, \cdots, N\}\}$. Further,  $S_{i,j,k}$ is the random variable indicating the size of flow (or data transfer) from $i$ to $j$.	
\end{definition}

We note that the size of flows can be discretized to positive integers, while only loosing a factor $1 +\epsilon$ in the objective function value for any $\epsilon>0$, where the number of discrete levels are $O(1/\epsilon)$ following similar proof as in \cite[Lemma 1]{skutella2016unrelated}. Thus, we will assume that the random variable $S_{i,j,k}$ only takes non-negative integer values.

We next define the notion of co-flow scheduling. Co-flow scheduling problem is to schedule the different flows on each link $(i,j)$, where the different flows are given as $\cup_{k=1}^N \{(i,j,k): \Pr(S_{i,j,k}=0)<1\}$. Non-preemptive scheduling implies that once a task $(i,j,k) \in {\cal T}_k$ is started, it will be processed till completion. By stochastic, we mean that the $S_{i,j,k}$ is the random variable, whose cumulative distribution function is known. We assume that the probability that $S_{i,j,k}$ is at least $t$ be $p_{i,j,k,t}$, or $p_{i,j,k,t} = \Pr(S_{i,j,k}\ge t)$. 



We assume a time-slotted model. We partition time into time slots $(t\in\{0, 1, \cdots\})$. For example, $t=0$ is the first time slot with unit time length. Further, we assume that each source port can only send one unit of data per time slot and every sink port can only receive one unit of data per time slot.




Let $w_k, k\in \{1, \cdots, N\}$ be the weights of the different co-flows. Let the expected completion time of a co-flow $(i,j,k)\in {\cal T}_k$ be $C_{i,j,k}$. Further, the expected completion time of a co-flow task ${\cal T}_k$ is given as $\max_{(i,j,k)\in {\cal T}_k} C_{i,j,k}$. Based on these, the Stochastic Non-Preemptive Co-flow scheduling is defined as follows.

\begin{definition}
Stochastic Non-Preemptive Co-flow scheduling wishes to find the order of scheduling non-preemptive co-flows on each link with stochastic processing times, so as to minimize $\sum_{k=1}^N w_k C_k$.  
\end{definition}



\begin{figure}[hbtp]
\definecolor{myblue}{RGB}{80,80,160}
\definecolor{mygreen}{RGB}{80,160,80}

\begin{tikzpicture}[thick,
every node/.style={draw,circle},
fsnode/.style={fill=myblue},
ssnode/.style={fill=mygreen},
every fit/.style={ellipse,draw,inner sep=-2pt,text width=2cm},
->,shorten >= 3pt,shorten <= 3pt
]

\begin{scope}[start chain=going below,node distance=7mm]
\foreach \i in {1,2,...,5}
\node[fsnode,on chain] (f\i) [label=left: \i] {};
\end{scope}

\begin{scope}[xshift=6cm,yshift=0cm,start chain=going below,node distance=7mm]
\foreach \i in {1,2,...,5}
\node[ssnode,on chain] (s\i) [label=right: \i] {};
\end{scope}

\node [myblue,fit=(f1) (f5),label=above: Source Ports] {};
\node [mygreen,fit=(s1) (s5),label=above: Sink Ports] {};
\draw (f1) -- (s1) node[draw=none,fill=none,pos=0.4]{$1,2$};
\draw (f2) -- (s1)node[draw=none,fill=none,pos=0.4]{$3$};
\draw (f2) -- (s2)node[draw=none,fill=none,pos=0.4]{$2,3$};
\draw (f2) -- (s4)node[draw=none,fill=none,pos=0.4]{$1$};
\draw (f4) -- (s3)node[draw=none,fill=none,pos=0.2]{$1$};
\draw (f3) -- (s4)node[draw=none,fill=none,pos=0.2]{$3$};
\draw (f5) -- (s5)node[draw=none,fill=none,pos=0.4]{$2$};
\draw (f5) -- (s1)node[draw=none,fill=none,pos=0.6]{$3$};
\end{tikzpicture}
\caption{An example to demonstrate the co-flows on each link. The number on the links represents the task numbers that has a flow on that link.}\label{ex_figure}
\end{figure}
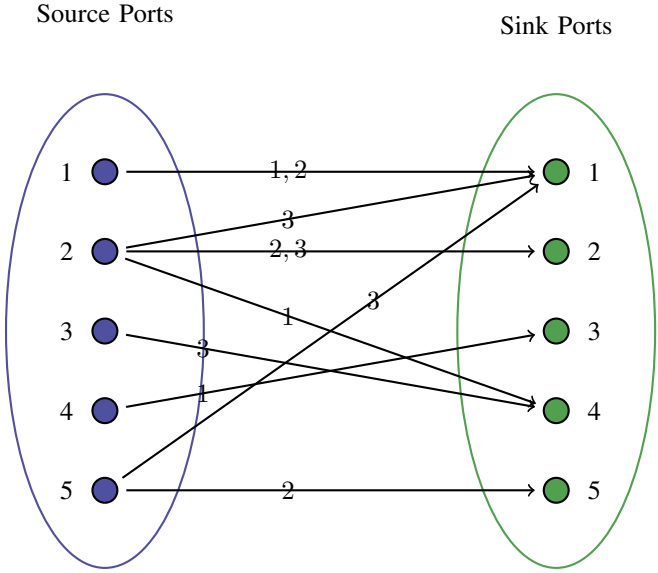

As an example, consider $m=5$ servers as depicted in Figure \ref{ex_figure}. We wish to schedule three co-flows given as follows.

\begin{align*}
	{\cal T}_1&=\{(1, 1, 1),(2, 4, 1),(4, 3, 1)\},\\
	{\cal T}_2&=\{(1, 1, 2),(2, 2, 2),(5, 5, 2)\},\\
	{\cal T}_3&=\{(2, 1, 3),(2, 2, 3),(3, 4, 3),(5, 1, 3)\}.
\end{align*}
All flows have stochastic sizes $S_{i,j,k}$, but we know their distributions. By capacity constraints, flows $(1,1,1)$, $(1,1,2)$, $(2,1,3)$, and $(5,1,3)$ can not be processed simultaneously since they share the same sink port. Similarly, $(2,1,3)$, $(2,2,2)$, $(2,2,3)$, $(2,4,1)$ can not be processed simultaneously since they share the same source port.







\section{Proposed Algorithm}
\label{sec:approach}

The proposed algorithm uses a linear programming (LP) relaxation of the co-flow scheduling problem. Since the solution of the relaxed problem may not in general be feasible, an algorithm using the solution of the relaxed problem that gives a feasible schedule will be provided. Guarantees that the proposed algorithm is approximately optimal will be derived in Section \ref{sec:guarantees}. 

We first derive a LP relaxation of the stochastic non-preemptive co-flow scheduling problem. Let $y_{i,j,k,t}$ be the indicator that the flow $(i,j,k)$ will be started processing at time slot $t$. In the relaxation, we will relax the integer constraint so that $y_{i,j,k,t}$ represents the probability that flow $(i,j,k)$ will be started processing at time slot $t$. Further, let the optimization problem variable $C_k$ represent the expected completion time of $k$-th co-flow. Then, the LP relaxed problem to minimize the weighted expected completion time of the co-flows can be formulated as follows. 

\begin{align}
& \min &&\sum_{k=1}^N w_kC_k \label{1}\\
& \rm{s.t.} && \sum_{t=0}^{\infty}y_{i,j,k,t}=1 \label{2}\\
& && \qquad \forall i,j\in\mathcal{M},\quad\forall k\in\{1,\cdots,N\};\nonumber\\
& &&\sum_{j\in\mathcal{M}}\sum_{k=1}^N\sum_{t=0}^s y_{i,j,k,t}p_{i,j,k,s-t}\leq 1 \label{3}\\
& &&\qquad\forall i\in\mathcal{M},\quad s\in\{0,1,\cdots\};\nonumber\\
& &&\sum_{i\in\mathcal{M}}\sum_{k=1}^N\sum_{t=0}^s y_{i,j,k,t}p_{i,j,k,s-t}\leq 1\label{4}\\
& &&\qquad\forall j\in\mathcal{M},\quad s\in\{0,1,\cdots\};\nonumber\\
& &&C_k\geq \sum_{t=0}^\infty y_{i,j,k,t}(t+\mathbb{E}[S_{i,j,k}])\label{5}\\
& &&\qquad \forall i,j\in\mathcal{M},\quad  k\in\{1,\cdots,N\};\nonumber\\
& && y_{i,j,k,t}\geq 0 \label{6}\\
& &&\qquad \forall i,j\in\mathcal{M},\quad\forall k\in\{1,\cdots,N\},\quad t\in\{0,1,\cdots\}.\nonumber
\end{align}


Constraint \eqref{2} says that every constituent flow $(i,j,k)$ will be assigned some time. Constraints \eqref{3} and \eqref{4} are matching constraints, where on an average, at most one unit of data leaves the source or enters the sink. Constraint \eqref{5} says the expected completion time of a co-flow task is at least the maximum among all of its constituent flows' expected completion time. Constraint \eqref{6} says the probability that every constituent flow $(i,j,k)$ starts at time $t$ $(t\in\{0,1,\cdots\})$ is non-negative.

We denote the optimal value of $C_k$ from the LP relaxation problem as $C_k^{LP}$,  the optimal $y_{i,j,k,t}$ as $y_{i,j,k,t}^{LP}$, and the expected completion time of flow $(i,j,k)$  as $C_{i,j,k}^{LP} = \sum_{t=0}^\infty y_{i,j,k,t}^{LP}(t+\mathbb{E}[S_{i,j,k}])$

We first note that even though $t$ is being summed till $\infty$, it only needs to be summed till the sum of maximum flow sizes on each link.  In \cref{lma0}, we will show that the infinite time-indexed LP relaxation can be approximated by a finite time-indexed relaxation.  Further, we see that the constraints are necessary for the co-flow scheduling problem. Thus, the weighted expected completion time as the optimal objective of the LP relaxation is less than or equal to the optimal expected weighted completion time of the co-flows. In other words, 

\begin{equation}
\sum_{k=1}^N w_kC_k^{LP} \le \sum_{k=1}^N w_kC_k^*, \label{bnd_opt}
\end{equation}
where $\{C_1^*, \cdots, C_N^*\}$ are the expected completion times of the optimal co-flow scheduling.

We give a definition of pseudo-permutation matrix, which is used in the following algorithm.
\begin{definition}
A pseudo-permutation matrix is a square binary matrix that has at most one entry of $1$ in each row and each column and $0$s elsewhere. 
\end{definition}

We will now describe the proposed algorithm, Non-Preemptive Stochastic Co-flow Scheduling (NPSCS), which is summarized in Algorithm \ref{algo:npscs}

\begin{algorithm}[htbp]
	\begin{algorithmic}[1]
		\STATE {\bf Input:}  $p_{i,j,k,t}$, $\mathbb{E}[S_{i,j,k}]$, $(i,j\in\mathcal{M},k\in\{1,\cdots,N\})$. 
		\STATE {\bf Output:} A list of perfect matching co-flows to be scheduled in turn, $\Gamma$.
		\STATE Solve LP problem \eqref{1}-\eqref{6}, and obtain $y_{i,j,k,t}^{LP}$ for all $(i,j\in\mathcal{M},k\in\{1,\cdots,N\})$.
		\FOR {$i\in\{1,\cdots,m\}$ }
		\FOR{ $j\in\{1,\cdots,m\}$ } 
		\FOR {$k\in\{1,\cdots,N\}$}
		\STATE Choose $t \in \{0,1, \cdots \}$ such that the probability mass function of $t$ is $\Pr(t=a)=y_{i,j,k,a}^{LP}$. 
		\STATE Choose $r\in\{0,1,\cdots, \}$ such that the probability mass function of $r$ is  $\Pr(r=b) = p_{i,j,k,b}/\mathbb{E}[S_{i,j,k}]$.
		\STATE Compute $t(i,j,k)=t+r$ as the tentative start time of flow $(i,j,k)$.
		\ENDFOR
		\ENDFOR
		\ENDFOR
		\FOR{ $s\in\{0,1,\cdots\}$}
		\STATE Create a matrix ${\textbf D(s)}\in\mathbb{R}^{m\times m}$. Denote ${\textbf D(s)}_{i,j}$ as the $(i,j)$ entry of ${\bf D}(s)$. Initialize each entry of ${\bf D}(s)$ as zero.
		\STATE Create a set $\mathcal{J}(s)=\{~\}$
		\FOR{ $i\in\{1,\cdots,m\}$ }
		\FOR{ $j\in\{1,\cdots,m\}$ } 
		\FOR{ $k\in\{1,\cdots,N\}$}
		\IF {$t(i,j,k)==s$}
		\STATE ${\textbf D(s)}_{i,j} = {\textbf D(s)}_{i,j}+\mathbb{E}[S_{i,j,k}]$
		\STATE ${\mathcal{J}(s)}=\mathcal{J}(s)\bigcup (i,j,k)$
		\ENDIF
		\ENDFOR
		\ENDFOR
		\ENDFOR
		\STATE Input matrix ${\bf D}(s)$ to \cref{algo:GLJD} to obtain a set of pseudo-permutation matrices $\{{\bf X}(s)^1,\cdots,{ \bf X}(s)^{l_s}\}$, where $l_s$ is the number of resulting pseudo-permutation matrices from \cref{algo:GLJD}
		\FOR {$l\in\{1,\cdots, l_s\}$}
		\STATE {${\mathcal{I}(s)^l}=\{~\}$}
		\FOR {$i\in\{1,\cdots,m\}$ }
		\FOR{ $j\in\{1,\cdots,m\}$ } 
		\IF {${\textbf X(s)^l}_{i,j}==1$}
		\STATE {${\mathcal{I}(s)^l}=\Big{(}\bigcup_{k=1}^N (i,j,k)\Big{)}\bigcap {\mathcal{J}(s)}$}		
		\ENDIF
		\ENDFOR
		\ENDFOR
		\ENDFOR
		\ENDFOR
		\STATE $\Gamma = [~]$
		\FOR{ $s\in\{0,1,\cdots\}$}
		\FOR{ $l\in\{1,\cdots,l_s\}$}
		\STATE $\Gamma = [\Gamma, \mathcal{I}(s)^l]$
		\ENDFOR
		\ENDFOR
	\end{algorithmic}
	\caption{Non-Preemptive Stochastic Co-flow Scheduling (NPSCS) }\label{algo:npscs}
\end{algorithm}

For non-preemptive scheduling, the problem reduces to deciding the start time of all the constituent flows. The proposed scheduling algorithm, NPSCS,  summarized in  \cref{algo:npscs},  consists of 4 steps. 

In Step 1 (line 3), we solve the relaxed LP problem \eqref{1}-\eqref{6} and get the optimal probability $y_{i,j,k,t}^{LP}$ for each constituent flow $(i,j,k)$ starting at each time slot $t$. Although we can solve the LP and get optimal solution $\sum_{k\in\{1,\cdots,N\}}w_kC_k^{LP}$, the solution is not in general a feasible schedule.  However, $\sum_{k=1}^Nw_kC_k^{LP}$ is a lower bound for the optimal scheduling, and $y_{i,j,k,t}^{LP}$ can provide insights on scheduling the start time of flow $(i,j,k)$.

In Step 2 (lines 4-12) and Step 3 (lines 13-37), we show how to turn the time-indexed LP relaxation to a feasible schedule. In Step 2, we generate $t$ with probability mass function $y_{i,j,k,t}^{LP}$ and $r\in\{0,1,\cdots\}$ with probability mass function $p_{i,j,k,r}/\mathbb{E}[S_{i,j,k}]$. We define the tentative start time $t(i,j,k)$ for each flow $(i,j,k)$ by summing up $t$ and $r$. 

In Step 3 (line 13-27), we group all the flows with same tentative start time $s$ as $\mathcal{J}(s)$. We build a matrix ${\bf D}(s)\in \mathbb{R}^{m\times m}$ with only non-negative elements. The intersection of sets $\{(i,j,k)|k\in\{1,\cdots,N\}$ and $\mathcal{J}(s)$ is the set of flows with tentative start time $s$ and transfer data from server $i$ to server $j$. We sum the expectation size of flows in the intersection set up and set $\big{(}{\bf D}(s)\big{)}_{i,j}$ to be the value of this sum. We want to process all flows within $\mathcal{J}(s)$ simultaneously, but the capacity constraints of each server do not allow flows with same source server or same sink server to be processed on one time slot. Only flows having no interference on source or sink servers can be processed at the same time slot. As a result, we use Greedy Low Jitter Decomposition algorithm (GLJD) given in  \cref{algo:GLJD} to get a set of pseudo-permutation matrices $\{{\bf X}(s)^1,\cdots,{\bf X}(s)^{l_s}\}$, where $l_s$ is the number of resulting pseudo-permutation matrices.  For any ${\bf X}(s)^{l}$ where $l\in\{1,\cdots,l_s\}$, we denote all flows in $\mathcal{J}(s)$ from server $i$ to server $j$ for which $\big{(}{\bf X}(s)^{l}\big{)}_{i,j}$ is 1 as $\mathcal{I}(s)^l$. 

GLJD has originally been proposed in \cite{keslassy2003guaranteed} to  enable traffic scheduling with low-jitter guarantees.  GLJD returns a set of pseudo-permutation matrices, whose structures provides feasible co-flow scheduling satisfying capacity constraints.
For $s\in\{0,1,\cdots\}$, GLJD algorithm (\cref{algo:GLJD}) first sort the expectation size of all flows from $\mathcal{J}(s)$ in non-increasing order to create a list $\mathcal{L}$. We record the source server and sink server locations of the $n$-th largest expected size as $\rho(n)$ and $\kappa{n}$. We greedily pick elements from the top to the bottom of the list $\mathcal{L}$ as long as the elements do not share same source server or sink server with the help of $\rho(\cdot)$ and $\kappa(\cdot)$. Once we finish searching the list, we create a pseudo-permutation matrix $X_s^{l}$ ($l$ is the number of times we search the list from the beginning) with entry $(i,j)$ equal to one if and only if at least one element picked from $\mathcal{L}$ has $\rho(\cdot)=i$ and $\kappa(\cdot)=j$ ($\forall i,j\in\{1,\cdots,m\}$).  Next, we delete the corresponding elements from list $\mathcal{L}$ and start constructing for the next pseudo-permutation matrix until the list $\mathcal{L}$ becomes empty. 

In Step 4 (lines 38-43, \cref{algo:npscs}), the NPSCS algorithm schedules flows in the order of $\Gamma$, which is a concatenation of $\mathcal{I}(s)^l$,  for all $s\in\{0,1,\cdots\}$, and $l\in\{1,\cdots,l_s\}$ as can be seen in \cref{algo:npscs}.

\begin{algorithm}[htbp]
	\begin{algorithmic}[1]
		\STATE {\bf Input:}  A matrix ${\bf D}(s)\in\mathbb{R}^{m\times m}$ with only non-negative elements 
		\STATE {\bf Output:} A set of pseudo-permutation matrices $\{{\bf X}(s)^1,\cdots,{\bf X}(s)^{l_s}\}$, where $l_s$ is the number of resulting matrices in the set
		\STATE Create a list $\mathcal{L}$ of all non-zero entries in ${\bf D}(s)$ by non-increasing order of their values
		\STATE $a_{\mathcal{L}}=|\mathcal{L}|$ ($|\mathcal{L}|$ is the number of elements in list $\mathcal{L}$)
		\STATE {Set $\mathcal{M}=\{(i,j)|{\rm for~all~} {\bf D}(s)_{i,j}>0, i\in\{1,\cdots,m\},j\in\{1,\cdots,m\}\}$}
		\FOR {$n\in\{1,\cdots,a_{\mathcal{L}}\}$}
		\FOR {$(i,j)\in\mathcal{M}$}
		\IF {$\mathcal{L}(n)=={\bf D}(s)_{i,j}$}
		\STATE {set $\rho(n)=i$ and $\kappa(n)=j$}
		\STATE {Remove $(i,j)$ from $\mathcal{M}$}
		\ENDIF
		\ENDFOR
		\STATE {$n = n+1$}
		\ENDFOR
		\STATE Initialize $l=1$
		\WHILE{$\mathcal{L}\neq \emptyset$}
		\STATE Set ${\bf X}(s)^l={\bf{0}^{m\times m}}$
		\STATE Set $a=1$
		\STATE Set $\mathcal{C}[k]=0$ $\forall k\in\{1,\cdots,m\}$
		\WHILE {$a\leq a_{\mathcal{L}}$ }
		\WHILE {$\mathcal{C}[\rho(a)]==0$ and $\mathcal{C}[\kappa(a)]==0$}
		\STATE $\big{(}{\bf X}(s)^l\big{)}_{\rho(a),\kappa(a)} = 1$
		\STATE $\mathcal{C}[\rho(a)]=\mathcal{C}[\kappa(a)]=1$
		\STATE Eliminate entry $a$ from list $\mathcal{L}$
		\ENDWHILE
		\STATE $a = a+1$
		\ENDWHILE 
		\STATE $l =  l+1$
		\ENDWHILE
	\end{algorithmic}
	\caption{Greedy Low Jitter decomposition (GLJD): }\label{algo:GLJD}
\end{algorithm}

For a single $s$, we now provide an example to explain step 3 (line 13-27) and step 4 (line 38-43).  Suppose we have $4$ servers and $3$ co-flows. Further, suppose that the  flows given in Table \ref{tbl:ex} have $s$  as their tentative start time.
\begin{table}
\begin{center}
\begin{tabular}{ |c | c | c | c | c | c |}
\hline $(i,j,k)$ &  $\mathbb{E}[S_{i,j,k}]$ & $(i,j,k)$ &  $\mathbb{E}[S_{i,j,k}]$ & $(i,j,k)$ &  $\mathbb{E}[S_{i,j,k}]$ \\
\hline $(1,1,1)$ & $0.38$ & $(2,2,2)$  & $0.24$   & $(1,3,2)$ & $0.22$ \\
\hline $(2,1,3)$ & $0.11$ &  $(3,2,1)$& $0.19$ & $(2,3,1)$& $0.20$ \\
\hline $(4,1,1)$  & $0.20$& $(3,2,2)$  &  $0.31$ & $(2,3,2)$ & $0.20$ \\
\hline $(4,1,3)$ &  $0.31$&  $(3,2,3)$   &  $0.03$   & $(2,3,3)$& $0.20$  \\
\hline $(1,4,1)$ &  $0.40$ &  $(4,2,2)$ &  $0.23$ & $(3,3,3)$ & $0.14$ \\
\hline $(2,4,1)$  &  $0.05$   &   $(4,4,1)$&  $0.22$ & $(4,3,3)$ & $0.04$\\
\hline  $(3,4,2)$ & $0.33$ & - & - &-& -\\
\hline
\end{tabular}  \label{tbl:ex}\caption{The flows with tentative start time $s$ in the example}
\end{center}
\end{table}
The flows in Table \ref{tbl:ex} form the matrix $\bf{D(s)}$, which is given  as 
\begin{equation*}\bf{D(s)}=
\begin{bmatrix}
0.38 & 0 & 0.22 & 0.40\\
0.11 & 0.24 & 0.60 & 0.05\\
0 & 0.53 & 0.14 & 0.33\\
0.51 & 0.23 & 0.04 & 0.22
\end{bmatrix},
\end{equation*}
and $\mathcal{J}(s)$ is given as 
\begin{eqnarray*}
\mathcal{J}(s)&=&\{(1,1,1),(2,1,3), (4,1,1), (4,1,3), (2,2,2), \\
&&(3,2,1), (3,2,2), (3,2,3), (4,2,2), (1,3,2),\\
&& (2,3,1),(2,3,2), (2,3,3), (3,3,3), (4,3,3),\\
&& (1,4,1), (2,4,1), (3,4,2), (4,4,1)\}.
\end{eqnarray*}

From {\bf Algorithm} \ref{algo:GLJD}, we have the set of psedo-permutation matrices:
\begin{align*}\bf{X(s)^1}&=
&\begin{bmatrix}
0 & 0 & 0& 1\\
0& 0& 1 & 0\\
0 & 1 & 0 & 0\\
1 & 0 & 0 & 0
\end{bmatrix}&,
\bf{X(s)^2}=
\begin{bmatrix}
1& 0 & 0& 0\\
0& 1& 0 & 0\\
0 & 0 & 0 & 1\\
0 & 0 & 1 & 0
\end{bmatrix},\\
\bf{X(s)^3}&=
&\begin{bmatrix}
0& 0 & 1& 0\\
1&0& 0 & 0\\
0 & 0 & 0 & 0\\
0 & 1 & 0 & 0
\end{bmatrix}&,
\bf{X(s)^4}=
\begin{bmatrix}
0& 0 & 0 & 0\\
0&0& 0 & 0\\
0 & 0 & 1 & 0\\
0 & 0 & 0 & 1
\end{bmatrix},\\
\bf{X(s)^5}&=&
\begin{bmatrix}
0& 0 & 0& 0\\
0&0& 0 & 1\\
0 & 0 & 0 & 0\\
0 & 0 & 0 & 0
\end{bmatrix}&.
\end{align*}

Thus, \cref{algo:GLJD} gives  $l_s=5$. For $l=1$, we take all the flows in $\mathcal{J}(s)$ having in the form of $(1,4,\cdot), (2,3,\cdot), (3,2,\cdot), (4,1,\cdot)$ to $\mathcal{I}(s)^1$. Therefore, 
\begin{eqnarray*}
\mathcal{I}(s)^1&=&\{(1,4,1),(2,3,1),(2,3,2),(2,3,3),(3,2,1),\\
&&(3,2,2),(3,2,3),(4,1,1),(4,1,3)\}.
\end{eqnarray*}

For $l=2$, we take all the flows in $\mathcal{J}(s)$ having in the form of $(1,1,\cdot), (2,2,\cdot), (3,4,\cdot), (4,3,\cdot)$ to $\mathcal{I}(s)^2$. Therefore, 
\begin{eqnarray*}
\mathcal{I}(s)^2=\{(1,1,1),(2,2,2),(3,4,2),(4,3,3)\}.
\end{eqnarray*}

For $l=3$, we take all the flows in $\mathcal{J}(s)$ having in the form of $(1,3,\cdot), (2,1,\cdot), (4,2,\cdot)$ to $\mathcal{I}(s)^3$. Therefore, 
\begin{eqnarray*}
\mathcal{I}(s)^3=\{(1,3,2),(2,1,3),(4,2,2)\}.
\end{eqnarray*}

For $l=4$, we take all the flows in $\mathcal{J}(s)$ having in the form of $(3,3,\cdot), (4,4,\cdot)$ to $\mathcal{I}(s)^4$. Therefore, 
\begin{eqnarray*}
\mathcal{I}(s)^4=\{(3,4,2),(4,4,1)\}.
\end{eqnarray*}

For $l=5$, we take all the flows in $\mathcal{J}(s)$ having in the form of $(2,4,\cdot)$ to $\mathcal{I}(s)^5$. Therefore, 
\begin{eqnarray*}
\mathcal{I}(s)^5=\{(2,4,1)\}.
\end{eqnarray*}

We add $\mathcal{I}(s)^1,\mathcal{I}(s)^2,\mathcal{I}(s)^3,\mathcal{I}(s)^4,\mathcal{I}(s)^5$ to $\Gamma$ in order. When all flows with tentative start time less than $s$ are scheduled, we start  processing the flows in $\mathcal{I}(s)^1$ on different servers simultaneously. By processing, we mean that these flows are enqueued in the corresponding links thus forming a schedule for this task. We  start to process flows in $\mathcal{I}(s)^{i+1}$ after all the flows in $\mathcal{I}(s)^{i}$ are scheduled. Also, we  process flows in $\mathcal{I}(s+1)^{1}$ after all flows in $\mathcal{I}(s)^{l_s}$ are scheduled.

\section{Approximation Guarantee for NPSCS}
\label{sec:guarantees}

In this section, we will show that the proposed algorithm, NPSCS, is an approximation algorithm with bounded worst case approximation factor. First, we will show that even though the LP problem has unbounded time slots, we can upper bound the number of time slots to be a pseudo-polynomial in the input size. This will be followed by the approximation guarantee of the proposed algorithm. 

\subsection{Approximation by truncating the number of time slots}

Recall that the LP formulation \eqref{1}-\eqref{6} contains infinitely many variables and constraints since $t$ is summed till $\infty$. Inspired by Lemma 4 and Theorem 5 from \cite{skutella2016unrelated}, we show a pseudo-polynomial upper bound on the largest time index for $t$ in co-flow case. 

\begin{theorem}\label{lma0}
Suppose 
\begin{align*}
F_1&\triangleq mN \max_{k\in\{1,\cdots,N\}}\sum_{i\in\mathcal{M}}\sum_{j\in\mathcal{M}}\sum_{k=1}^N\mathbb{E}[S_{i,j,k}],\\
F_2&\triangleq 2mN\max_{i,j\in\mathcal{M},k\in\{1,\cdots,N\}}\mathbb{E}[S_{i,j,k}].
\end{align*}
Let $F\triangleq 2F_1+F_2$. Then, there is a set of optimal solutions of LP relaxation \eqref{1}-\eqref{6}: $\{y_{i,j,k,t}^{LP*}|i,j\in\mathcal{M},k\in\{1,\cdots,N\},t\in\{0,1,\cdots\}\}$ such that $y_{i,j,k,t}^{LP*}=0$ for $i,j\in\mathcal{M},k\in\{1,\cdots,N\},t>F$.
\end{theorem}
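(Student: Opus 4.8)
The plan is a left-shifting (exchange) argument on an optimal solution of the linear program, following the template of \cite[Lemma 4 and Theorem 5]{skutella2016unrelated}. First I would remove the ``infinitely many variables'' difficulty by a standard truncation: if $M_{i,j,k}$ denotes the largest value in the support of $S_{i,j,k}$ and $T:=\sum_{i,j,k}M_{i,j,k}$, then a purely sequential schedule already fits inside $[0,T]$, so the value of \eqref{1}--\eqref{6} is unchanged when the variables $y_{i,j,k,t}$ with $t>T$ are set to $0$ (this is the content of \cite[Lemma 4]{skutella2016unrelated}); the resulting finite LP attains its optimum, and if $T\le F$ there is nothing left to prove. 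Otherwise, among the nonempty compact set of optimal solutions of the finite LP I pick one, $y^{*}$, minimizing the potential $\Phi(y):=\sum_{i,j,k,t} y_{i,j,k,t}\,t$ -- the ``most left-shifted'' optimum -- and the goal becomes to show that $\Phi$-minimality is incompatible with having any mass on a slot beyond $F$.

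So suppose $y^{*}_{i_0,j_0,k_0,t_0}>0$ for some flow $(i_0,j_0,k_0)$ and some $t_0>F$. The crucial estimate controls how congested the source port $i_0$ and the sink port $j_0$ can be before time $t_0$. Summing \eqref{3} over all $s$ and exchanging the order of summation, then using $\sum_{u\ge 0}p_{i,j,k,u}=1+\mathbb{E}[S_{i,j,k}]$ and \eqref{2}, shows that the total expected load ever passing through source $i_0$ is exactly $\sum_{j,k}(1+\mathbb{E}[S_{i_0,j,k}])$, which is at most $2\sum_{j,k}\mathbb{E}[S_{i_0,j,k}]$ once zero-size flows are discarded and one uses that every scheduled flow has integral size at least $1$; the symmetric bound holds at sink $j_0$ through \eqref{4}. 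Hence the number of slots in $\{0,\dots,t_0-1\}$ at which \emph{either} port carries load more than $1/2$ is $O\big(\sum_{j,k}\mathbb{E}[S_{i_0,j,k}]+\sum_{i,k}\mathbb{E}[S_{i,j_0,k}]\big)=O(F_1)$, which is what the $2F_1$ term in $F$ pays for. Since $t_0>F=2F_1+F_2$, this leaves roughly $F_2$ further slots before $t_0$ on which both ports are at most half-loaded; because $F_2=2mN\max_{i,j,k}\mathbb{E}[S_{i,j,k}]$ dominates (twice) the number of slots that a single flow's processing can occupy, one can find a time $t'<t_0$ such that transferring an infinitesimal amount $\varepsilon$ of mass from $y^{*}_{i_0,j_0,k_0,t_0}$ to $y^{*}_{i_0,j_0,k_0,t'}$ keeps every instance of \eqref{3} and \eqref{4} below $1$ -- the slots at and after $t_0$ only shed load, and near the far end of the new occupancy window the slack that is needed is weighted by the small quantities $p_{i_0,j_0,k_0,\cdot}$. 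This transfer conserves total mass (\eqref{2}), preserves nonnegativity (\eqref{6}), and, since $t'<t_0$, only decreases the right-hand side of \eqref{5} for $(i_0,j_0,k_0)$, so the objective \eqref{1} does not increase; the result is therefore another optimal solution with strictly smaller $\Phi$, contradicting the choice of $y^{*}$.

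Consequently $y^{*}_{i,j,k,t}=0$ for every $t>F$, which is the assertion. I expect the middle paragraph to be the main obstacle. Unlike the single-machine relaxation of \cite{skutella2016unrelated}, a valid left-shift here must simultaneously respect the matching constraints at the source \emph{and} at the sink of the flow being moved, so the counting has to produce an underloaded window at two different ports at once; the delicate sub-case is a flow whose size distribution has a long support but a small mean, which forces a long window yet needs only vanishing slack near its end, and is handled by shrinking $\varepsilon$ and weighting the slack requirement at slot $t'+u$ by $p_{i_0,j_0,k_0,u}$. Reconciling the exact constants hidden in $F_1$ and $F_2$ with this estimate is then routine bookkeeping in the style of \cite{skutella2016unrelated}.
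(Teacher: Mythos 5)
Your overall strategy (reduce to a finite LP, then argue that an optimal solution can be assumed to have no mass beyond $F$) is in the right spirit, but the core of your argument --- the left-shift exchange in the middle paragraph --- has a gap that I do not think can be repaired with the tools you describe. First, you assert that $F_2$ ``dominates (twice) the number of slots that a single flow's processing can occupy.'' It does not: $F_2=2mN\max_{i,j,k}\mathbb{E}[S_{i,j,k}]$ is defined through \emph{expected} sizes, whereas the window you must clear when moving mass from $t_0$ to $t'$ is $\{t'+u: p_{i_0,j_0,k_0,u}>0,\ t'+u<t_0\}$, whose length is the maximum of the support of $S_{i_0,j_0,k_0}$; a flow with $S=10^6$ w.p.\ $10^{-6}$ and $S=1$ otherwise has mean about $2$ but support of length $10^6$, so no counting of ``half-loaded slots'' against $F_1$ or $F_2$ produces the window you need. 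Second, and more fundamentally, your proposed fix --- ``shrinking $\varepsilon$ and weighting the slack requirement at slot $t'+u$ by $p_{i_0,j_0,k_0,u}$'' --- cannot handle a slot $s\in[t',t_0)$ at which constraint \eqref{3} or \eqref{4} holds with \emph{equality} and $p_{i_0,j_0,k_0,s-t'}>0$: the transfer adds the strictly positive quantity $\varepsilon\,p_{i_0,j_0,k_0,s-t'}$ to a constraint that has zero slack, for every $\varepsilon>0$. You would need to show that a $t'$ exists whose entire downstream window avoids all tight slots at \emph{both} ports, and your estimates do not give this. (Two smaller issues: your opening truncation at $T=\sum M_{i,j,k}$ presupposes bounded supports, which the model does not grant; and $1+\mathbb{E}[S]\le 2\mathbb{E}[S]$ fails when $\Pr(S=0)$ is close to $1$, so even the ``$O(F_1)$ overloaded slots'' count needs repair.)

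For contrast, the paper avoids exchange arguments entirely. It takes any optimal $y^{LP}$ and defines $y^{LP*}$ by leaving all mass at $t<F$ untouched, collapsing \emph{all} mass at $t\ge F$ onto the single slot $t=F$, and then verifies feasibility of \eqref{3}--\eqref{4} directly at every $s\ge F$ by splitting the load into two pieces: the contribution of mass placed at $t<2F_1$ is at most $\tfrac{1}{2mN}\cdot mN=\tfrac12$ because $s-t> F_2$ forces $p_{i,j,k,s-t}\le \tfrac{1}{2mN}$ by Markov's inequality applied to $S_{i,j,k}$, and the contribution of mass placed at $t\ge 2F_1$ is at most $\tfrac12$ because Markov's inequality applied to the start times shows $\sum_{j,k}\sum_{t\ge 2F_1}y_{i,j,k,t}\le \tfrac12$. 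Since collapsing mass leftward only decreases the right-hand side of \eqref{5}, optimality is preserved. This global ``one-shot'' modification is why the paper never has to worry about tight intermediate constraints or about support lengths; if you want to salvage your write-up, replacing the exchange step with this direct construction is the most economical route.
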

\begin{proof}
This result is akin to Lemma 4 in \cite{skutella2016unrelated}, which we prove here for completeness. 
For any $i\in\mathcal{M}$, we have the following:
\begin{eqnarray}\label{auxy}
&&\sum_{j\in\mathcal{M}}\sum_{k=1}^N\sum_{t\geq 2F_1}y_{i,j,k,t}\nonumber\\
&=&\sum_{j\in\mathcal{M}}\sum_{k=1}^N \mathbb{P}r({\rm time~to~start~}(i,j,k)\geq 2F_1)\nonumber\\
&\leq& \frac{\sum_{j\in\mathcal{M}}\sum_{k=1}^N \mathbb{E}[{\rm time~to~start~}(i,j,k)]}{2F_1}\nonumber\\
&\leq&\frac{F_1}{2F_1}\nonumber\\
&=&\frac{1}{2}.
\end{eqnarray}
The first equality follows from the definition of $y_{i,j,k,t}$. Recall that $y_{i,j,k,t}$ is the probability of starting flow $(i,j,k)$ at time $t$, $\sum_{t\geq 2F_1}y_{i,j,k,t}$ is the probability of starting flow $(i,j,k)$ no early than $2F_1$. The second inequality follows from Markov's inequality. The expected start time of an arbitrary flow starting at server is upper bounded by $\Big{(}\max_{k\in\{1,\cdots,N\}}r_{k}+\sum_{i\in\mathcal{M}}\sum_{j\in\mathcal{M}}\sum_{k=1}^N\mathbb{E}[S_{i,j,k}]\Big{)}$. Since there are at most $mN$ flows transferring data from server $i$. Therefore, the summation of expected start time of all flows from $i$ is bounded by $F_1$, leading to the third inequality.

Similarly, for any $j\in\mathcal{M}$, we have the following:
\begin{align*}
&\sum_{i\in\mathcal{M}}\sum_{k=1}^N\sum_{t\geq 2F_1}y_{i,j,k,t}\\
\leq&\sum_{i\in\mathcal{M}}\sum_{k=1}^N \mathbb{P}r({\rm time~to~start~}(i,j,k)\geq 2F_1)\\
\leq& \frac{\sum_{i\in\mathcal{M}}\sum_{k=1}^N \mathbb{E}[{\rm time~to~start~}(i,j,k)]}{2F_1}\\
\leq&\frac{1}{2}.
\end{align*}

Recall that $p_{i,j,k,r}$ is the probability that the $S_{i,j,k}$ is greater or equal to $r$. Based on the definition of $F_2$ and Markov's inequality, for any $r\geq F_2$, we have the following:
\begin{eqnarray}\label{auxp}
&&p_{i,j,k,r}=\mathbb{P}r[S_{i,j,k}\geq r]\nonumber\\
&\leq &\mathbb{P}r[S_{i,j,k}\geq 2mN\cdot\mathbb{E}[S_{i,j,k}]]\nonumber\\
&\leq & \frac{1}{2mN}.
\end{eqnarray}

Now we define a set of new LP solution $\{y_{i,j,k,r}^{LP*}|i,j\in\mathcal{M},k\in\{1,\cdots,N\}\}$:
\begin{equation*}
y_{i,j,k,r}^{LP*}\triangleq \left\{
\begin{aligned}
&y_{i,j,k,r}^{LP}\qquad\qquad &&r<F, \\
&\sum_{r'\geq F}y_{i,j,k,r'}^{LP} \qquad \qquad &&r=F, \\
&0\qquad\qquad &&r>F.
\end{aligned}
\right.
\end{equation*}

The new set of solutions will not get worse objective function \eqref{1} since the change will not make the original $C_k^{LP}$ larger based on \eqref{5}. Now we prove that the new set of solutions is feasible, satisfying constraints \eqref{2}-\eqref{6}. Satisfying of \eqref{2} and \eqref{6} can be seen in a straightforward fashion and are thus omitted. 

For \eqref{3}, if $s<F$, $y_{i,j,k,t}^{LP*}$ equals $y_{i,j,k,t}^{LP}$ for all $t\in\{0,\cdots,s\}$ by definition, \eqref{3} remains the same:
\begin{equation*}
\sum_{j\in\mathcal{M}}\sum_{k=1}^N\sum_{t=0}^sy_{i,j,k,t}^{LP*}p_{i,j,k,s-t}\leq 1.
\end{equation*}

If $s\geq F$,
\begin{eqnarray*}
&&\sum_{j\in\mathcal{M}}\sum_{k=1}^N\sum_{t=0}^sy_{i,j,k,t}^{LP*}p_{i,j,k,s-t}\\
&=&\sum_{j\in\mathcal{M}}\sum_{k=1}^N\sum_{t=0}^{2F_1-1}y_{i,j,k,t}^{LP}p_{i,j,k,s-t}\\
&& \qquad \qquad +\sum_{j\in\mathcal{M}}\sum_{k=1}^N\sum_{t=2F_1}^sy_{i,j,k,t}^{LP*}p_{i,j,k,s-t}\\
&\leq & \frac{1}{2mN}\sum_{j\in\mathcal{M}}\sum_{k=1}^N\sum_{t=0}^{2F_1-1}y_{i,j,k,t}^{LP}+\sum_{j\in\mathcal{M}}\sum_{k=1}^N\sum_{t\geq 2F_1}y_{i,j,k,t}^{LP}\\
&\leq & \frac{1}{2}+\frac{1}{2}=1.
\end{eqnarray*}
The first inequality follows from \eqref{auxp}, $p_{i,j,k,t}\leq 1$, and the definition of $y_{i,j,k,t}^{LP*}$. The second inequality follows from \eqref{auxy}, and $y_{i,j,k,t}\leq 1$.

Similarly, for \eqref{4}, if $s<F$, $y_{i,j,k,t}^{LP*}$ equals $y_{i,j,k,t}^{LP}$ for all $t\in\{0,\cdots,s\}$ by definition, \eqref{3} remains the same:
\begin{equation*}
\sum_{i\in\mathcal{M}}\sum_{k=1}^N\sum_{t=0}^sy_{i,j,k,t}^{LP*}p_{i,j,k,s-t}\leq 1.
\end{equation*}

If $s\geq F$,
\begin{eqnarray*}
&&\sum_{i\in\mathcal{M}}\sum_{k=1}^N\sum_{t=0}^sy_{i,j,k,t}^{LP*}p_{i,j,k,s-t}\\
&=&\sum_{i\in\mathcal{M}}\sum_{k=1}^N\sum_{t=0}^{2F_1-1}y_{i,j,k,t}^{LP}p_{i,j,k,s-t}\\
&& \qquad \qquad +\sum_{i\in\mathcal{M}}\sum_{k=1}^N\sum_{t=2F_1}^sy_{i,j,k,t}^{LP*}p_{i,j,k,s-t}\\
&\leq & \frac{1}{2mN}\sum_{i\in\mathcal{M}}\sum_{k=1}^N\sum_{t=0}^{2F_1-1}y_{i,j,k,t}^{LP}+\sum_{i\in\mathcal{M}}\sum_{k=1}^N\sum_{t\geq 2F_1}y_{i,j,k,t}^{LP}\\
&\leq & \frac{1}{2}+\frac{1}{2}\\
&=&1.
\end{eqnarray*}
The first inequality follows from \eqref{auxp}, $p_{i,j,k,t}\leq 1$, and the definition of $y_{i,j,k,t}^{LP*}$. The second inequality follows from \eqref{auxy}, and $y_{i,j,k,t}\leq 1$.

Thus, we have constructed a set of optimal solutions of LP relaxation \eqref{1}-\eqref{6}: $\{y_{i,j,k,t}^{LP*}\}$, satisfying $y_{i,j,k,t}^{LP*}=0$ for $i,j\in\mathcal{M},k\in\{1,\cdots,N\},t>F$.
\end{proof}
Based on this result, we have reduced the problem with infinite number of variables to a truncated time-indexed LP of pseudo-polynomial size.

We note that the intervals of geometrically increasing lengths can be chosen, as in \cite[Chapter 2.13]{skutella1998approximation}. In this discretization, the first interval $I_0 = [0,1]$, and the other intervals are $I_l = [(1+\epsilon)^{l-1}, (1+\epsilon)^{l}]$ for $l\ge 1$. This can lead to solving an approximation of time-indexed LP, albeit at an expense of losing a factor $1 + \epsilon$ in the objective function. 




\subsection{Approximation Result}
\label{sec:proofs}
In this section, we will prove that NPSCS is an approximation algorithm with a competitive ratio of $(2\log{m}+1)(1+\sqrt{m}\Delta)(1+m{\Delta}){(3+\Delta)}/{2}$, where $\Delta$ is the upper bound of $\mathbb{CV}[S_{i,j,k}]^2$ for all $i,j\in \mathcal{M}$, $k\in\{1,\cdots,N\}$. More formally, we have the following result. 

\begin{theorem} \label{thm5.1}
We aim to optimize the weighted completion time of $N$ co-flows on $m$ servers. The completion time of the $k$-th coflow tasks under scheduling algorithm NPSCS is at most $C_k^{LP}(2\log{m}+1)(1+\sqrt{m}\Delta)(1+m{\Delta}){(3+\Delta)}/{2}$, where $\Delta$ is the upper bound of $\mathbb{CV}[S_{i,j,k}]^2$ for all $i,j\in \mathcal{M}$, $k\in\{1,\cdots,N\}$. $\mathbb{CV}[S_{i,j,k}]^2\triangleq (\mathbb{E}[S_{i,j,k}^2]-\mathbb{E}[S_{i,j,k}]^2)/\mathbb{E}[S_{i,j,k}]^2$ is the squared coefficient of variation of $S_{i,j,k}$.
\end{theorem}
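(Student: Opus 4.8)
The plan is to bound, for every co-flow $k$, the expected completion time of $k$ produced by NPSCS — the expectation taken over the sampling of $t$ and $r$ in Step~2 of \cref{algo:npscs} and over the realized sizes $S_{i,j,k}$ — by the claimed multiple of $C_k^{LP}$; together with \eqref{bnd_opt} and $\sum_k w_k C_k^{LP}\le\sum_k w_kC_k^\ast$ this is exactly what is needed. I would organize the argument around four stages. First, a deterministic decomposition of the completion time. Because NPSCS processes the groups $\mathcal I(s)^l$ strictly in the concatenated order of $\Gamma$, any flow $(i,j,k)$ — which lands in exactly one group $\mathcal I(t(i,j,k))^{l^\ast}$ — completes no later than the sum of the durations of all groups up to and including its own, so $C^{\mathrm{alg}}_{i,j,k}\le\sum_{s=0}^{t(i,j,k)}\sum_{l=1}^{l_s}\mathrm{dur}(\mathcal I(s)^l)$, where $\mathrm{dur}(\mathcal I(s)^l)=\max_{(i',j'):\,({\bf X}(s)^l)_{i',j'}=1}\ \sum_{k':\,t(i',j',k')=s}S_{i',j',k'}$ is the time needed to clear that frame. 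Since $C_k=\max_{(i,j,k)\in{\cal T}_k}C_{i,j,k}$ and the right-hand side above is monotone non-decreasing in $t(i,j,k)$, it suffices to work one constituent flow $(i,j,k)$ at a time, bound $\mathbb E[C^{\mathrm{alg}}_{i,j,k}]$, and conclude with $C^{LP}_{i,j,k}\le C^{LP}_k$ from \eqref{5}.

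Second, I would pass from the realized frame durations to the LP matrices ${\bf D}(s)$ using GLJD. Conditioning on all the tentative start times leaves the sizes $S_{i,j,k}$ independent of the grouping, since $r$ is drawn from the distribution of $S_{i,j,k}$ rather than from its realization. Write each column sum $\sum_{k'}S_{i',j',k'}$ appearing in a frame as its mean ${\bf D}(s)_{i',j'}$ plus a zero-mean fluctuation whose variance is at most $\sum_{k'}\mathbb{CV}[S_{i',j',k'}]^2\,\mathbb E[S_{i',j',k'}]^2\le\Delta\,{\bf D}(s)_{i',j'}^2$, and bound $\mathbb E_S[\max(\cdot)]\le\max\mathbb E(\cdot)+(\sum(\text{variances}))^{1/2}$ over the at most $m$ active entries of the pseudo-permutation matrix; this produces the factor $(1+\sqrt m\,\Delta)$. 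Then invoke the guarantee of GLJD (\cref{algo:GLJD}, from \cite{keslassy2003guaranteed}): applied to the non-negative matrix ${\bf D}(s)$ it returns pseudo-permutation matrices whose assigned durations sum to at most $(2\log m+1)$ times $\mu(s):=\max\{\text{largest row sum of }{\bf D}(s),\ \text{largest column sum of }{\bf D}(s)\}$. Combining, $\mathbb E_S\big[\sum_l\mathrm{dur}(\mathcal I(s)^l)\,\big|\,\text{grouping}\big]\le(1+\sqrt m\,\Delta)(2\log m+1)\,\mu(s)$.

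Third, I would pass from the ${\bf D}(s)$'s to the LP value, and from the tentative times back to $C^{LP}$. The bridge is the identity $\mathbb E[{\bf D}(s)_{i,j}]=\sum_k\mathbb E[S_{i,j,k}]\Pr(t(i,j,k)=s)=\sum_k\sum_{t\le s}y^{LP}_{i,j,k,t}\,p_{i,j,k,s-t}$, valid because $t(i,j,k)=t+r$ with $t\sim y^{LP}_{i,j,k,\cdot}$ and $\Pr(r=b)=p_{i,j,k,b}/\mathbb E[S_{i,j,k}]$; hence by \eqref{3} and \eqref{4} the expected row and column sums of ${\bf D}(s)$, and of the partial sums $\sum_{s\le\tau}{\bf D}(s)$, are at most $1$ and at most $\tau+1$ respectively. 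Moving the maxima over lines outside the expectation with a second variance estimate — the indicators $\mathbb 1[t(i',j',k')=s]$ being independent over $(i',j',k')$ — is the step that introduces the factor $(1+m\Delta)$, and yields $\mathbb E\big[\sum_{s=0}^{\tau}\mu(s)\,\big|\,t(i,j,k)=\tau\big]=O(\tau)$ up to the constants already collected. Finally, from the definition of $C^{LP}_{i,j,k}$ one has $\mathbb E[t]=C^{LP}_{i,j,k}-\mathbb E[S_{i,j,k}]$, and since $\sum_{b\ge1}b\,\Pr(S_{i,j,k}\ge b)=\tfrac12(\mathbb E[S_{i,j,k}^2]+\mathbb E[S_{i,j,k}])$ we get $\mathbb E[r]=\tfrac12\mathbb E[S_{i,j,k}](1+\mathbb{CV}[S_{i,j,k}]^2)+\tfrac12\le\tfrac12(1+\Delta)\mathbb E[S_{i,j,k}]+\tfrac12$; using $\mathbb E[S_{i,j,k}]\le C^{LP}_{i,j,k}$ (again \eqref{5}, as $t\ge0$) this gives the last factor $\tfrac{3+\Delta}{2}$ after tidying the constants. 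Assembling the four stages and using $C^{LP}_{i,j,k}\le C^{LP}_k$ produces the asserted bound.

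The step I expect to be the main obstacle is the passage in the third stage from ``the LP controls the \emph{expected} line sums of each ${\bf D}(s)$'' to ``the expected \emph{makespan} of the frame-based schedule through tentative time $\tau$ is $O(\tau)$'': the constraints \eqref{3}–\eqref{4} only bound $\mathbb E[\text{row/column sums of }{\bf D}(s)]$, whereas GLJD charges the maximum line sum $\mu(s)$ at each $s$, and summing maxima naively over $s$ and over the $m$ lines threatens a spurious extra factor of $m$. Showing that the ``hot line'' contributions at the different $s\le\tau$ telescope into the partial sums $\sum_{s\le\tau}{\bf D}(s)$ that the LP does control, and that the two distinct sources of randomness — the realized sizes $S_{i,j,k}$ and the random grouping via the $t(i,j,k)$'s — combine to give exactly $(1+\sqrt m\,\Delta)(1+m\Delta)$ rather than something worse, is where the real work lies. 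Correctly importing and invoking the GLJD makespan guarantee of \cite{keslassy2003guaranteed} (analogous to how \cref{lma0} imports the truncation argument of \cite{skutella2016unrelated}) is the secondary technical point.
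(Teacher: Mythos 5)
Your overall architecture---LP lower bound, tentative start times $s=t+r$, per-frame GLJD decomposition, the identity $\Pr[t(i,j,k)=s]=\sum_{t\le s}y^{LP}_{i,j,k,t}\,p_{i,j,k,s-t}/\mathbb{E}[S_{i,j,k}]$, and the evaluation of $\sum_{r}(r+\tfrac12)p_{i,j,k,r}/\mathbb{E}[S_{i,j,k}]$ that yields the factor $(3+\Delta)/2$---matches the paper's proof, which runs through Lemma~\ref{lma5.3}, Corollary~\ref{cor}, Propositions~\ref{prop5.4}--\ref{prop5.6} and Lemma~\ref{lma5.7}. The genuine gap is in your accounting of the factor $(1+m\Delta)$. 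In the paper that factor is consumed at the very last step: $\mathbb{E}[C_k]=\mathbb{E}[\max_{(i,j,k)\in\mathcal{T}_k}C_{i,j,k}]$ is a maximum of up to $m^2$ random completion times, and Proposition~\ref{prop5.6} is invoked to pull the expectation inside the maximum at a cost of $(1+m\Delta)$. Your Stage~1 asserts that ``it suffices to work one constituent flow at a time'' and to conclude from $C^{LP}_{i,j,k}\le C^{LP}_k$; but bounding each $\mathbb{E}[C_{i,j,k}]$ separately only controls $\max_{(i,j)}\mathbb{E}[C_{i,j,k}]$, which is a \emph{lower} bound on $\mathbb{E}[C_k]$, not an upper bound. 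The step you declare unnecessary is exactly the one that costs $(1+m\Delta)$.

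The place where you propose to spend that factor instead---controlling $\mathbb{E}[\mu(s)]$, the expected maximum line sum of the random matrix $\mathbf{D}(s)$, given that \eqref{3}--\eqref{4} only control each expected line sum---cannot produce a factor of the form $(1+m\Delta)$. The fluctuations of the line sums of $\mathbf{D}(s)$ are driven by the independent indicator variables of the events $\{t(i,j,k)=s\}$, i.e.\ by the sampling of tentative start times, not by the variability of the sizes; their variances are not bounded in terms of $\Delta$, and at $\Delta=0$ (deterministic sizes) the matrices $\mathbf{D}(s)$ are still random and $\mathbb{E}[\mu(s)]$ can still exceed $1$ while your claimed correction degenerates to $1$. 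You have put your finger on a genuinely delicate point---the paper itself passes from ``expected efficient size at most $1$'' to ``expected group makespan at most $H$'' only through the total-expected-work accounting in the proof of Lemma~\ref{lma5.3}, rather than through a bound on $\mathbb{E}[\mu(s)]$---but your proposed resolution does not close it, and even if it did, you would still owe a separate factor for the coflow-level maximum. As written, the proposal does not establish the stated bound.
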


Before the proof of Theorem \ref{thm5.1}, we give a definition of efficient size for co-flow. We will later prove that each co-flow grouped by tentative start time has expected efficient size less or equal to one.

\begin{definition}[Efficient Size]\label{def}
Every stochastic co-flow can be represented as a matrix ${\bf D}\in\mathbb{R}^{m\times m}$ with its entry ${\bf D}_{ij}=\sum_{k=1}^n\mathbb{E}[S_{i,j,k}]$ representing the expected size of its constituent flow from server $i$ to server $j$. The efficient size of a stochastic co-flow is the maximum of the maximum column-sum and the maximum row-sum of its representative matrix. Namely, the efficient size of ${\bf D}$ is:
\begin{equation*}
\max\Big{\{}\max_i\sum_j \sum_k\mathbb{E}[S_{i,j,k}],\max_j\sum_i \sum_k\mathbb{E}[S_{i,j,k}]\Big{\}}.
\end{equation*}
\end{definition}
From Definition \ref{def}, we note that every $\mathcal{J}(s)$ has representative matrix ${\bf D}(s)$.

To prove Theorem \ref{thm5.1}, we use four lemmas. The first lemma proves that for all flows having the same tentative start time $s$, the stochastic co-flows grouped by same tentative start time $\mathcal{J}(s)$ has expected efficient size less or equal to $1$. If we assume the expected time to process a stochastic co-flow with efficient size less or equal to $1$ by Algorithm \ref{algo:GLJD} has upper bound $H$, namely the processing time for an arbitrary $\mathcal{J}(s)$ is bounded by $H$, then the summation of expected time scheduled before any flow with tentative start time $s$ has upper bound $(s+1/2)H$. 
\begin{lemma}\label{lma5.3}
Assume the expected time to process a stochastic co-flow with expected efficient size less or equal to $1$ by Algorithm \ref{algo:GLJD} has upper bound $H$. If we schedule all of the $N$ co-flows by policy \rm{NPSCS}, the total expected processing time before flow $(i,j,k)$ is at most $(t(i,j,k)+1/2)H$. Namely, the expected start time of flow $(i,j,k)$ is less or equal to $(t(i,j,k)+1/2)H$, where $t(i,j,k)$ is the tentative start time of flow $(i,j,k)$ in Algorithm \ref{algo:npscs}. In other words, the start time of set $\{\mathcal{I}(s)^1,\cdots,\mathcal{I}(s)^{l_s}\}$ is less or equal to $(s+1/2)H$. 
\end{lemma}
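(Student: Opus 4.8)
The plan is to walk through the realized NPSCS schedule one ``tentative-time block'' at a time and to charge the wall-clock length of each block to $H$, combining the hypothesis on $H$ with the fact (the first of these four lemmas) that every group $\mathcal J(s)$ has expected efficient size at most $1$.

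First I would make the realized non-preemptive schedule explicit. By Step~4 of Algorithm~\ref{algo:npscs} the sub-groups are executed in the concatenated order $\mathcal I(0)^1,\dots,\mathcal I(0)^{l_0},\mathcal I(1)^1,\dots,\mathcal I(1)^{l_1},\dots$, a sub-group being started only after every sub-group listed before it has finished. Inside a single sub-group $\mathcal I(s)^l$, the pseudo-permutation structure of ${\bf X}(s)^l$ forces its active source/sink pairs to be pairwise distinct, so flows on different links run concurrently while the (possibly several) flows of distinct co-flows sharing a link run back to back; hence the time to clear $\mathcal I(s)^l$ is $D_l(s)\triangleq\max_{(i,j):\,({\bf X}(s)^l)_{i,j}=1}\,\sum_{k:\,(i,j,k)\in\mathcal I(s)^l}S_{i,j,k}$, and the time to clear the block is $T_s\triangleq\sum_{l=1}^{l_s}D_l(s)$ --- exactly the time to process the stochastic co-flow $\mathcal J(s)$ (with representative matrix ${\bf D}(s)$) by Algorithm~\ref{algo:GLJD}. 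Consequently the start time of the set $\{\mathcal I(s)^1,\dots,\mathcal I(s)^{l_s}\}$ is $\sum_{s'=0}^{s-1}T_{s'}$, and every flow of $\mathcal J(s)$ starts no earlier than this.

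The crucial step is $\mathbb E[T_{s'}]\le H$ for every $s'$. By the first lemma, $\mathcal J(s')$, regarded as a stochastic co-flow whose per-link expected sizes are the entries of $\mathbb E[{\bf D}(s')]$, has expected efficient size at most $1$ --- which is exactly what constraints \eqref{3} and \eqref{4} of the LP give --- so the hypothesis applies and $\mathbb E[T_{s'}]\le H$. Summing over $s'=0,\dots,s-1$, the start time of the set $\{\mathcal I(s)^1,\dots,\mathcal I(s)^{l_s}\}$ has expectation $\sum_{s'=0}^{s-1}\mathbb E[T_{s'}]\le sH\le\left(s+\frac12\right)H$, which is the assertion with $s=t(i,j,k)$. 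For the reading ``expected start time of flow $(i,j,k)$'' I would add to this the expected offset of $(i,j,k)$ within its own block $\mathcal J(s)$; bounding that offset by the processing performed in $\mathcal J(s)$ ahead of $(i,j,k)$ and using the order-independent inequality $\sum_i d_i\sum_{j<i}d_j\le\frac12\left(\sum_i d_i\right)^2$ (the $d_i$ being the relevant flow durations, just as in the single-server WSEPT analysis of \cite{skutella2016unrelated}) shows this offset is at most $\frac12\mathbb E[T_s]\le\frac12 H$, which is where the ``$+\frac12$'' in the statement originates.

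The step I expect to be the main obstacle is the one just sketched, $\mathbb E[T_{s'}]\le H$, and two points need care. First, the groups $\mathcal J(s')$ are themselves random, so ``expected efficient size $\le1$'' must be read over \emph{all} of the randomness (the rounding of the tentative start times together with the realized sizes $S_{i,j,k}$); constraints \eqref{3}--\eqref{4} supply exactly this averaged bound, and one must ensure the hypothesis on $H$ is stated and invoked at the same level --- i.e.\ that the bound $H$ on Algorithm~\ref{algo:GLJD}'s expected running time depends only on the matrix of per-link \emph{expected} sizes and on the coefficient-of-variation bound $\Delta$. Second, in passing from the block-start bound to ``the processing done before flow $(i,j,k)$'' one sums the $T_{s'}$ up to the \emph{random} index $t(i,j,k)$, which is correlated with the groupings; this is handled by linearity together with the same LP constraints, but it is the bookkeeping step --- along with the precise accounting of the $\frac12$ term --- that must be written out with care.
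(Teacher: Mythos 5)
Your proposal is correct and follows essentially the same route as the paper: both use the distribution of the tentative start time (equation \eqref{7}) together with constraints \eqref{3}--\eqref{4} to show each group $\mathcal{J}(s')$ has expected efficient size at most $1$, invoke the hypothesis to charge each such block an expected cost of at most $H$, and obtain $(s+\tfrac12)H$ by summing the $s$ earlier blocks plus half of the current block's expected work (the paper realizes the $+\tfrac12$ via the term $\tfrac12\Pr[t(i',j',k')=s]$, which is the same averaging idea as your WSEPT-style offset bound). Your explicit handling of the randomness of the groupings and of the conditioning on $t(i,j,k)=s$ is, if anything, more careful than the paper's.
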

\begin{proof}
From the scheduling policy,
\begin{equation}\label{7}
Pr[t(i,j,k)=s]=\sum_{t=0}^s y_{i,j,k,t}\frac{p_{i,j,k,s-t}}{\mathbb{E}[S_{i,j,k}]}.
\end{equation}
From the above equation and constraints \eqref{3}-\eqref{4}, we have
\begin{eqnarray*}
&&\max_i\sum_j\sum_{k'\neq k}\mathbb{E}[S_{i,j,k'}]Pr[t(i,j,k')=s']\\
&=&\max_i\sum_j\sum_{k'\neq k}\sum_{t'=0}^{s'}y_{i,j,k',t'}p_{i,j,k',s'-t'}\leq 1;
\end{eqnarray*}
and
\begin{eqnarray*}
&&\max_j\sum_i\sum_{k'\neq k}\mathbb{E}[T_{i,j,k'}]Pr[t(i,j,k')=s']\\
&=&\max_j\sum_i\sum_{k'\neq k}\sum_{t'=0}^{s'}y_{i,j,k',t'}p_{i,j,k',s'-t'}\leq 1.
\end{eqnarray*}
Since we combine the set of flows with same tentative start time $s$ as a stochastic co-flow $\mathcal{J}(s)$, which has expected efficient size less or equal to $1$, the expected processing time of each $\mathcal{J}(s)$ is upper bounded by $H$ by assumption in this lemma. In the proposed algorithm, the flows with less tentative start time are scheduled earlier. Therefore, the summation of expected time scheduled before co-flow $(i,j,k)$ (suppose $t(i,j,k)=s$) is at most 
\begin{eqnarray*}
&&H\sum_{(i',j',k')\neq(i,j,k)}\mathbb{E}[S_{i',j',k'}]\Big{(}\sum_{s'=0}^{s-1}Pr[t(i',j',k')=s']\\
&+&\frac{1}{2}Pr[t(i',j',k')=s]\Big{)}\leq (s+\frac{1}{2})H.
\end{eqnarray*}
This proves the required result. 

\end{proof}

\cref{lma5.3} shows that the expected start time of any flow $(i,j,k)$ is at most $(t(i,j,k)+1/2)H$. \cref{cor}, given next, gives an upper bound of expected completion time of flow $(i,j,k)$. 
\begin{corollary}\label{cor}
Assume the expected time to process a stochastic co-flow with expected efficient size less or equal to $1$ by \cref{algo:GLJD} has upper bound $H$. For any $i,j\in\mathcal{M}$, $k\in\{1,\cdots,N\}$, and $s\in\{0,1,\cdots\}$, given $t(i,j,k)=s$, the conditional upper bound of expected completion time of flow $(i,j,k)$, $\mathbb{E}[C_{i,j,k}|t(i,j,k)=s]$ is at most $(s+1/2)H+\mathbb{E}[S_{i,j,k}]$. 
\end{corollary}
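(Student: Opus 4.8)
The plan is to obtain \cref{cor} as an almost immediate consequence of \cref{lma5.3} together with the non-preemptive nature of the schedule produced by NPSCS. Fix $i,j\in\mathcal{M}$, $k\in\{1,\cdots,N\}$ and $s\in\{0,1,\cdots\}$, and condition on the event $\{t(i,j,k)=s\}$. Since under NPSCS the flow $(i,j,k)$ is processed without interruption once it starts, its completion time equals its start time plus its own (random) processing requirement $S_{i,j,k}$. Hence, by linearity of expectation,
\begin{equation*}
\mathbb{E}[C_{i,j,k}\mid t(i,j,k)=s] = \mathbb{E}[\text{start time of }(i,j,k)\mid t(i,j,k)=s] + \mathbb{E}[S_{i,j,k}\mid t(i,j,k)=s],
\end{equation*}
so it suffices to bound each of the two terms on the right.

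For the first term I would invoke \cref{lma5.3} directly: conditioned on $t(i,j,k)=s$, the total expected processing time of everything that NPSCS schedules before $(i,j,k)$ in the list $\Gamma$ — which is precisely the expected start time of $(i,j,k)$ — is at most $(s+1/2)H$, where $H$ is the assumed upper bound on the expected time needed by \cref{algo:GLJD} to clear one group $\mathcal{J}(s')$ of expected efficient size at most $1$.

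For the second term, the key observation is that the auxiliary variables $t$ (drawn with p.m.f.\ $y^{LP}_{i,j,k,\cdot}$) and $r$ (drawn with p.m.f.\ $p_{i,j,k,\cdot}/\mathbb{E}[S_{i,j,k}]$) used in Step~2 of \cref{algo:npscs} to define $t(i,j,k)=t+r$ are generated independently of the actual realized flow size $S_{i,j,k}$. Consequently the conditioning event $\{t(i,j,k)=s\}$ is independent of $S_{i,j,k}$, so $\mathbb{E}[S_{i,j,k}\mid t(i,j,k)=s]=\mathbb{E}[S_{i,j,k}]$. Adding the two bounds yields $\mathbb{E}[C_{i,j,k}\mid t(i,j,k)=s]\le (s+1/2)H+\mathbb{E}[S_{i,j,k}]$, which is the claim.

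The argument is short, and I expect the only point requiring care to be the independence claim in the second term: one must confirm that neither the conditioning nor the start-time bound of \cref{lma5.3} secretly depends on the realization of $S_{i,j,k}$ — indeed the start time of $(i,j,k)$ is determined by the tentative start times $t(\cdot,\cdot,\cdot)$ and by the sizes of the \emph{other} flows scheduled earlier in $\Gamma$, never by $S_{i,j,k}$ itself — so that replacing the conditional expectation of $S_{i,j,k}$ by its unconditional expectation is legitimate. Everything else is just linearity of expectation and the non-preemption of the schedule.
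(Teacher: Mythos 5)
Your proposal is correct and is essentially the paper's own argument: the paper likewise obtains the corollary directly from Lemma \ref{lma5.3} by writing the expected completion time as the expected start time plus the expected transfer time. The only addition is your explicit justification that $\mathbb{E}[S_{i,j,k}\mid t(i,j,k)=s]=\mathbb{E}[S_{i,j,k}]$ (since $t$ and $r$ are sampled independently of the realized size), a point the paper leaves implicit but which is indeed valid.
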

\begin{proof}
	The result follows directly from Lemma \ref{lma5.3} since the expected completion time can be obtained by adding the expected start time and the transfer time.
	\end{proof}

\begin{proposition}[Lemma 2, \cite{skutella2016unrelated}] \label{prop5.4}
For every $i,j\in\mathcal{M}$, $k\in\{1,\cdots,N\}$ and $r\in\{0,1,\cdots\}$, we have
\begin{equation*}
\sum_{r\in\mathbb{Z}_{\geq 0}}(r+\frac{1}{2})\frac{p_{i,j,k,r}}{\mathbb{E}[S_{i,j,k}]}=\frac{1+\mathbb{CV}[S_{i,j,k}]^2}{2}\mathbb{E}[S_{i,j,k}],
\end{equation*}
where $\mathbb{CV}[S_{i,j,k}]^2$ is $S_{i,j,k}$'s squared coefficient of variation.
\end{proposition}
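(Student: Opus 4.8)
The plan is to treat the claimed equality as a purely one-variable statement about the integer-valued random variable $S := S_{i,j,k}$ and to prove it by reducing both sides to the first two moments of $S$. Write $\mu := \mathbb{E}[S]$ and recall the ``layer-cake'' tail-sum identities for a nonnegative integer random variable,
\begin{equation*}
\sum_{r\geq 1}\Pr(S\geq r) = \mu, \qquad \sum_{r\geq 1}(2r-1)\,\Pr(S\geq r) = \mathbb{E}[S^2],
\end{equation*}
valid whenever $\mathbb{E}[S^2]<\infty$. The first is immediate; for the second one writes $\Pr(S\geq r) = \sum_{s\geq r}\Pr(S=s)$, interchanges the order of the resulting nonnegative double sum, and uses $\sum_{r=1}^{s}(2r-1) = s^2$. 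Rearranging the second identity and substituting the first then yields the weighted tail sum $\sum_{r\geq 1} r\,\Pr(S\geq r) = \frac{1}{2}\big(\mathbb{E}[S^2]+\mu\big)$.

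Next I would plug these into the left-hand side of the proposition. The one point that needs care is the precise meaning of $p_{i,j,k,r}$ here: for $p_{i,j,k,r}/\mathbb{E}[S_{i,j,k}]$ to be the probability mass function sampled in line~8 of Algorithm~\ref{algo:npscs} one needs $\sum_{r\geq 0} p_{i,j,k,r} = \mu$, which (consistently with the matching constraints \eqref{3}--\eqref{4}, where $p_{i,j,k,s-t}$ is the probability that a flow started at slot $t$ is still occupying its port at slot $s$) amounts to reading $p_{i,j,k,r}$ as the upper tail $\Pr(S\geq r+1)$. Reindexing the left-hand sum via $c = r+1$ then turns $\sum_{r\geq 0}\big(r+\frac{1}{2}\big)\,p_{i,j,k,r}$ into $\sum_{c\geq 1}\big(c-\frac{1}{2}\big)\,\Pr(S\geq c)$, and the two tail-sum identities above collapse this to $\mathbb{E}[S^2]/(2\mu)$.

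Finally I would expand the right-hand side: from $\mathbb{CV}[S_{i,j,k}]^2 = (\mathbb{E}[S^2] - \mu^2)/\mu^2$ one gets $1 + \mathbb{CV}[S_{i,j,k}]^2 = \mathbb{E}[S^2]/\mu^2$, so $\frac{1}{2}\big(1 + \mathbb{CV}[S_{i,j,k}]^2\big)\mu = \mathbb{E}[S^2]/(2\mu)$, which matches the reduced left-hand side and finishes the proof. I expect the only real obstacle to be bookkeeping rather than analysis --- keeping the index conventions for $p_{i,j,k,r}$ consistent between this lemma, the LP constraints, and the sampling step of Algorithm~\ref{algo:npscs} --- because every series involved has nonnegative terms and converges once the second moments are finite, so all the interchanges of summation are automatically justified.
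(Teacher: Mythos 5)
Your proof is correct, and it is worth noting that the paper itself offers no proof of this statement at all: Proposition~\ref{prop5.4} is imported verbatim as Lemma~2 of \cite{skutella2016unrelated}, so your argument is a genuine addition rather than an alternative route. The reduction you use --- the tail-sum identities $\sum_{r\geq 1}\Pr(S\geq r)=\mathbb{E}[S]$ and $\sum_{r\geq 1}(2r-1)\Pr(S\geq r)=\mathbb{E}[S^2]$, followed by matching both sides to $\mathbb{E}[S^2]/(2\mathbb{E}[S])$ --- is exactly the elementary computation behind the cited lemma, and every interchange of summation is justified by nonnegativity as you say. Your most valuable observation is the one about the meaning of $p_{i,j,k,r}$: with the paper's literal definition $p_{i,j,k,t}=\Pr(S_{i,j,k}\geq t)$ one gets $\sum_{r\geq 0}p_{i,j,k,r}=1+\mathbb{E}[S_{i,j,k}]$, so line~8 of Algorithm~\ref{algo:npscs} would not define a probability mass function and the left-hand side of the proposition would evaluate to $\bigl(1+2\mathbb{E}[S_{i,j,k}]+\mathbb{E}[S_{i,j,k}^2]\bigr)/\bigl(2\mathbb{E}[S_{i,j,k}]\bigr)$ rather than $\mathbb{E}[S_{i,j,k}^2]/\bigl(2\mathbb{E}[S_{i,j,k}]\bigr)$. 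Reading $p_{i,j,k,r}$ as the strict upper tail $\Pr(S_{i,j,k}>r)=\Pr(S_{i,j,k}\geq r+1)$, as you do, is the unique convention under which the sampling step normalizes, the identity holds, and the statement agrees with the source lemma in \cite{skutella2016unrelated} (which is indeed phrased with $\Pr[P_{ij}>r]$); this is a genuine, if minor, notational inconsistency in the paper that your proof surfaces and repairs.
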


The following few results will be needed to prove \cref{var} which gives an upper bound for $H$ given in \cref{lma5.3}. 

The next proposition is from Theorem 6 in \cite{keslassy2003guaranteed}, giving us a guarantee for GLJD decomposition

\begin{proposition}[Theorem 6, \cite{keslassy2003guaranteed}]\label{propgljd}
For any $s\in\{0,1,\cdots\}$, if a co-flow $\mathcal{J}(s)$ to be processed on $m$ servers has an efficient size less or equal to $1$, GLJD provides a decomposition $\mathcal{J}(s)=\sqcup_{l=1}^{l_s} {X^l}$ such that $\sum_{l\in\{1,\cdots,l_s\}}\max_{(i,j,k)\in X^l}\mathbb{E}[S_{i,j,k}]\leq (2\log{m}+1).$ 
\end{proposition}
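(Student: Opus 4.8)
The plan is to track, for each round index $l$ of \cref{algo:GLJD}, the single scalar $v_l$ defined as the largest entry of $\mathbf{D}(s)$ still present in the sorted list $\mathcal{L}$ at the moment round $l$ begins. Because $\mathcal{L}$ is scanned in non-increasing order and the topmost surviving entry always has both its row and its column free at the start of a round, that entry is put into $X^l$; hence $\max_{(i,j,k)\in X^l}\mathbb{E}[S_{i,j,k}]\le\max_{(i,j):\,(X^l)_{i,j}=1}\mathbf{D}(s)_{i,j}=v_l$, using $\mathbb{E}[S_{i,j,k}]\le\sum_{k'}\mathbb{E}[S_{i,j,k'}]=\mathbf{D}(s)_{i,j}$. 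Moreover $v_1\ge v_2\ge\cdots\ge v_{l_s}>0$, since each round deletes its own maximal surviving entry. It therefore suffices to prove $\sum_{l=1}^{l_s}v_l\le 2\log m+1$.

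The core is a structural claim explaining why an entry can fail to be selected early. Fix $l$ and let $e^\star=(i^\star,j^\star)$ attain $v_l$. This entry survived rounds $1,\dots,l-1$, i.e.\ it was examined but not selected in each of them. In any such round $r$, at the instant $e^\star$ was examined, its row $i^\star$ or its column $j^\star$ had already been marked used by an entry scanned earlier in round $r$; that entry lies in row $i^\star$ or column $j^\star$, has value $\ge v_l$ (non-increasing scan), and was itself selected in round $r$, hence deleted. These $l-1$ ``blockers'', one per round, are pairwise distinct (a deleted entry cannot reappear) and distinct from $e^\star$, so row $i^\star$ together with column $j^\star$ contains at least $l$ distinct entries of value $\ge v_l$.

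Now I invoke the hypothesis that the efficient size of $\mathcal{J}(s)$ is at most $1$, i.e.\ (Definition \ref{def}) every row sum and every column sum of $\mathbf{D}(s)$ is at most $1$. A single row holds at most $\lfloor 1/v_l\rfloor$ entries of value $\ge v_l$, and likewise a column, so row $i^\star\cup$ column $j^\star$ holds at most $2\lfloor 1/v_l\rfloor-1$ such entries (deducting the doubly counted $e^\star$). Comparing with the previous paragraph, $l\le 2\lfloor 1/v_l\rfloor-1\le 2/v_l-1$, hence $v_l\le\tfrac{2}{l+1}$; and since a row has at most $m$ entries the same comparison gives $l\le 2m-1$, so $l_s\le 2m-1$. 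Summing, $\sum_{l=1}^{l_s}v_l\le\sum_{l=1}^{2m-1}\tfrac{2}{l+1}=2\sum_{l=2}^{2m}\tfrac1l\le 2\log m+1$, which is the claimed bound; the estimate only slackens further under a base-$2$ logarithm convention, and for $m=1$ it reduces to $v_1\le 1$.

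I expect the main obstacle to be making the structural claim of the second paragraph fully rigorous against the precise semantics of \cref{algo:GLJD}: confirming that the inner while-loop really implements greedy maximal matching, that every surviving entry is examined in every round, that ``examined and not selected'' forces an already-used row or column, and that ties in $\mathcal{L}$ do not disturb the value inequalities. The remaining work — the $l_s\le 2m-1$ bound and the harmonic-sum estimate — is routine. Alternatively, since GLJD and its guarantee are exactly Theorem 6 of \cite{keslassy2003guaranteed}, one may simply cite that result; the sketch above is the self-contained version.
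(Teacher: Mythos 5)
Your proof is correct, but there is nothing in the paper to compare it against: the paper does not prove this statement at all, it imports it verbatim as Theorem~6 of \cite{keslassy2003guaranteed} and uses it as a black box (through Corollary~\ref{prop5.5} and Lemma~\ref{lma5.7}). Your blocking argument is a sound self-contained derivation, essentially the counting argument behind the original GLJD guarantee: with $v_l$ the largest entry of ${\bf D}(s)$ surviving at the start of round $l$, the top surviving entry is always selected first, so $\max_{(i,j,k)\in {\bf X}(s)^l}\mathbb{E}[S_{i,j,k}]\le {\bf D}(s)_{i^\star,j^\star}=v_l$; the $l-1$ blockers are pairwise distinct because each was selected and deleted in its own round, and together with $e^\star$ they give $l$ entries of value at least $v_l$ lying in row $i^\star$ or column $j^\star$; since the efficient-size hypothesis (Definition~\ref{def}) caps every row and column sum of ${\bf D}(s)$ at $1$, that union holds at most $2\lfloor 1/v_l\rfloor-1$ such entries and at most $2m-1$ entries overall, giving $v_l\le 2/(l+1)$ and $l_s\le 2m-1$. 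Two details should be pinned down. The harmonic estimate you assert does hold, but the crude bound $\sum_{j=1}^{n}1/j\le\ln n+1$ applied to $n=2m$ gives only $2\ln m+2\ln 2$, which is not enough; instead use $\sum_{l=3}^{2m}1/l\le\int_2^{2m}x^{-1}\,dx=\ln m$, so that $2\sum_{l=2}^{2m}1/l\le 2\ln m+1$, with equality at $m=1$ and additional slack if the logarithm is read base $2$. Also, your worry about the pseudocode is justified but does not affect your argument: \cref{algo:GLJD} as literally written marks source and sink in a single array $\mathcal{C}$ and retains stale list indices, which is not a greedy maximal matching (on the paper's own worked example it would produce ${\bf X}(s)^1=\{(2,3),(4,1)\}$ rather than the four-entry matching shown); your proof, like the cited Theorem~6, applies to the intended semantics with separate row-used and column-used flags, under which every surviving entry is examined in every round and non-selection indeed forces a previously selected blocker in its row or column. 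Relative to the paper's bare citation, your route buys a self-contained proof that makes explicit where the efficient-size-at-most-one hypothesis, the $2m-1$ round bound, and the constant $2\log m+1$ come from.
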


The next corollary, Corollary \ref{prop5.5}, follows directly from the Proposition \ref{propgljd}.

\begin{corollary} \label{prop5.5}
Suppose the number of servers $|\mathcal{M}|=m\geq 2$. For any $s\in\{0,1,\cdots\}$, $\mathcal{J}(s)=\sqcup_{l=1}^{l_s}{{\bf X}(s)^l}$, we have $\sum_{l=1}^{l_s}\max_{(i,j,k)\in {\bf X}(s)^l}\mathbb{E}[S_{i,j,k}]\leq (2\log{m}+1)$. 
\end{corollary}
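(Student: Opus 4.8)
The plan is to derive Corollary~\ref{prop5.5} as an essentially immediate consequence of Proposition~\ref{propgljd}, which is already stated and available. The only gap between the two statements is notational: Proposition~\ref{propgljd} is phrased in terms of an abstract decomposition $\mathcal{J}(s) = \sqcup_{l=1}^{l_s} X^l$ with the bound $\sum_{l} \max_{(i,j,k)\in X^l} \mathbb{E}[S_{i,j,k}] \le 2\log m + 1$, whereas Corollary~\ref{prop5.5} wants the same bound expressed through the concrete pseudo-permutation matrices ${\bf X}(s)^l$ produced by Algorithm~\ref{algo:GLJD} in NPSCS, under the standing assumption $m \ge 2$ (so that $2\log m + 1 \ge 1$ and the bound is non-vacuous). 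So the proof is just a matter of checking that the hypothesis of Proposition~\ref{propgljd} is met and that its conclusion transcribes faithfully.

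First I would invoke Lemma~\ref{lma5.3} (more precisely, the fact established there and reused throughout) that each co-flow $\mathcal{J}(s)$, formed by grouping all flows whose tentative start time equals $s$, has expected efficient size at most $1$; this is exactly what Proposition~\ref{propgljd} requires as input. Second, I would observe that in Algorithm~\ref{algo:GLJD} the output pseudo-permutation matrices ${\bf X}(s)^1,\dots,{\bf X}(s)^{l_s}$ are precisely the indicator matrices of the blocks $X^l$ of the GLJD decomposition: the entry $\big({\bf X}(s)^l\big)_{i,j}$ is $1$ exactly when some flow $(i,j,k)$ with source $i$, sink $j$ was selected into the $l$-th pass of the greedy scan, so $\max_{(i,j,k)\in {\bf X}(s)^l}\mathbb{E}[S_{i,j,k}]$ coincides with $\max_{(i,j,k)\in X^l}\mathbb{E}[S_{i,j,k}]$. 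Third, applying Proposition~\ref{propgljd} with this identification gives $\sum_{l=1}^{l_s}\max_{(i,j,k)\in {\bf X}(s)^l}\mathbb{E}[S_{i,j,k}] \le 2\log m + 1$, which is the claimed inequality.

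I do not expect any real obstacle here — the corollary is a direct restatement — but the one point that needs a sentence of care is the interpretation of the notation $(i,j,k)\in {\bf X}(s)^l$: it should be read as ``$(i,j,k)$ is a flow of $\mathcal{J}(s)$ with $\big({\bf X}(s)^l\big)_{i,j}=1$,'' i.e.\ a flow whose source–sink pair is selected by the matrix, matching the set $\mathcal{I}(s)^l$ constructed in lines 28–37 of Algorithm~\ref{algo:npscs}. With that reading the greedy scan in Algorithm~\ref{algo:GLJD} picks, in its $l$-th pass, the source–sink pair whose representative flow has the $n$-th largest expected size for the relevant indices $n$, so the per-block maximum expected size appearing in Proposition~\ref{propgljd} is literally the quantity summed in the corollary. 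The hypothesis $m\ge 2$ is only used to ensure the stated bound is meaningful and matches the logarithm base convention; the inequality itself holds verbatim from Proposition~\ref{propgljd}. Hence the corollary follows, and the proof is a short paragraph citing Lemma~\ref{lma5.3} and Proposition~\ref{propgljd}.
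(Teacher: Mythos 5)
Your proof is correct and takes essentially the same route as the paper, which simply observes that the corollary follows directly from Proposition~\ref{propgljd} once the GLJD output matrices are identified with the blocks of the decomposition; your additional care in verifying the efficient-size-at-most-one hypothesis (via the LP matching constraints, as in Lemma~\ref{lma5.3}) and in pinning down the meaning of $(i,j,k)\in {\bf X}(s)^l$ only makes explicit what the paper leaves implicit.
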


Recall that in Step 3 of \cref{algo:npscs}, we sum up all flows between same servers with same tentative start time, and regard the summed flow as a single flow. The following \cref{var} shows that the coefficient of variations of the resulting flow will be bounded by the one of the original flows.

\begin{lemma}\label{var}
Suppose $T_1,T_2,\cdots,T_k$ are $k$ independent random variables with expectation $\mathbb{E}[T_i]$ $(i\in\{1,\cdots,k\})$, and variances $Var(T_i)$ $(i\in\{1,\cdots,k\})$. Suppose the upper bound of their coefficient of variation is $\Delta$, then the random variable $T_1+T_2+\cdots+T_k$ has coefficient of variation that is upper bounded by $\Delta$.
\end{lemma}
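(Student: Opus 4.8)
The plan is to compute the squared coefficient of variation of $T \triangleq T_1 + \cdots + T_k$ directly, using independence to split the variance, and then bound the resulting expression by $\Delta$. Write $\mu_i \triangleq \mathbb{E}[T_i]$ and $v_i \triangleq \mathrm{Var}(T_i)$, so that by hypothesis $v_i \le \Delta \mu_i^2$ for each $i$. By independence, $\mathrm{Var}(T) = \sum_{i=1}^k v_i$, while $\mathbb{E}[T] = \sum_{i=1}^k \mu_i$, so
\[
\mathbb{CV}[T]^2 = \frac{\sum_{i=1}^k v_i}{\left(\sum_{i=1}^k \mu_i\right)^2}.
\]

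The key step is then the elementary inequality $\sum_{i=1}^k \mu_i^2 \le \left(\sum_{i=1}^k \mu_i\right)^2$, which holds because the $\mu_i$ are nonnegative (processing-time expectations), so the cross terms $2\sum_{i<j}\mu_i\mu_j$ on the right are nonnegative. Combining,
\[
\mathbb{CV}[T]^2 = \frac{\sum_{i=1}^k v_i}{\left(\sum_{i=1}^k \mu_i\right)^2} \le \frac{\sum_{i=1}^k \Delta\,\mu_i^2}{\left(\sum_{i=1}^k \mu_i\right)^2} = \Delta \cdot \frac{\sum_{i=1}^k \mu_i^2}{\left(\sum_{i=1}^k \mu_i\right)^2} \le \Delta,
\]
which is the claimed bound. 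One should note the harmless edge case where $\sum_i \mu_i = 0$: then every $\mu_i = 0$, hence every $v_i \le \Delta \mu_i^2 = 0$, so $T \equiv 0$ and the coefficient of variation is either $0$ or undefined; the statement holds vacuously, matching the convention used elsewhere in the paper (flows with $\Pr(S_{i,j,k}=0)=1$ are excluded).

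I do not expect a genuine obstacle here; the only thing to be careful about is that the bound $v_i \le \Delta \mu_i^2$ is exactly the restatement of $\mathbb{CV}[T_i]^2 \le \Delta$, and that nonnegativity of the $\mu_i$ is what makes the sum-of-squares inequality go through — without it the claim would be false. When this lemma is applied in \cref{algo:npscs}, the $T_i$ are the sizes $S_{i,j,k}$ of the distinct flows $(i,j,k)$ (for fixed $i,j$, varying $k$) that share the same link and the same tentative start time, which are independent by assumption, so the hypotheses are met and the aggregated flow ${\bf D}(s)_{i,j}$ inherits a squared coefficient of variation at most $\Delta$.
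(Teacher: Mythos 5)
Your proposal is correct and follows essentially the same route as the paper's proof: additivity of variance under independence, the per-variable bound $v_i \le \Delta\mu_i^2$, and the inequality $\sum_i \mu_i^2 \le (\sum_i \mu_i)^2$ for nonnegative means (the paper states the equivalent $\sqrt{\sum_i \mu_i^2} \le \sum_i \mu_i$ and works with the unsquared coefficient of variation, but the argument is identical). Your explicit remarks on nonnegativity and the degenerate case $\sum_i \mu_i = 0$ are correct and slightly more careful than the paper's version, but do not change the approach.
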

\begin{proof}
We have 
\begin{equation*}
\mathbb{E}[T_1+T_2+\cdots+T_k]=\sum_{i=1}^k\mathbb{E}[T_i],
\end{equation*}
and
\begin{equation*}
Var(T_1+T_2+\cdots+T_k)=\sum_{i=1}^kVar(T_i),
\end{equation*}
since $T_1,T_2,\cdots,T_k$ are independent.

Recall that the coefficient of variation of a random variable $T$ is $Var(S)/\mathbb{T}[S]$. Note that $\Delta=\max_{i\in \{1,\cdots,k\}}\sqrt{Var(T_i)}/\mathbb{E}[T_i]$. Therefore,
\begin{eqnarray*}
&&\frac{\sqrt{Var(T_1+T_2+\cdots+T_k)}}{\mathbb{E}[T_1+T_2+\cdots+T_k]}\\
&=&\frac{\sqrt{Var(T_1)+\cdots+Var(T_k)}}{\mathbb{E}[T_1]+\cdots+\mathbb{E}[T_k]}\\
&\leq & \frac{\Delta\sqrt{\mathbb{E}[T_1]^2+\cdots+\mathbb{E}[T_k]^2}}{\mathbb{E}[T_1]+\cdots+\mathbb{E}[T_k]}\\
&\leq & \Delta.
\end{eqnarray*}
The first equality follows from the independence between $T_1,\cdots,T_k$, the first inequality follows from Jensen's inequality.
\end{proof}


Corollary \ref{prop5.5} provides a upper bound of $\sum_{l=1}^{l_s}\max_{(i,j,k)\in {\bf X}(s)^l}\mathbb{E}[S_{i,j,k}]$. However, we are more interested in $\sum_{l=1}^{l_s}\mathbb{E}[\max_{(i,j,S_{i,j,D(s)})\in {\bf X}(s)^l}S_{i,j,D(s)}]$ which is the expected total processing time of all co-flows ${\bf X}(s)^l$ ($l\in\{1,\cdots,l_s\}$) from $\mathcal{J}(s)$. The following \cref{prop5.6} connects the two expressions.
\begin{proposition}[\cite{devroye1979inequalities}] \label{prop5.6}
If $T_1,T_2,\cdots,T_m$ are $m$ random variables with finite means and finite variances, then 
\begin{equation*}
\mathbb{E}[\max_{i} T_i]\leq \max_i \mathbb{E}[T_i]+\sqrt{m}\max_i\sqrt{Var(T_i)},
\end{equation*}
\end{proposition}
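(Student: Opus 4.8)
The plan is to prove the bound pathwise and then pass to expectations, using only Jensen's inequality and the elementary fact that the maximum of finitely many reals is controlled by their Euclidean norm; nothing about the joint law of $T_1,\dots,T_m$ (in particular no independence) is needed. Write $\mu^\star \triangleq \max_i \mathbb{E}[T_i]$ and $\sigma^\star \triangleq \max_i \sqrt{Var(T_i)}$, and recall that finiteness of all means and variances makes every expectation below well defined.

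First I would center the variables. Since $T_i = \mathbb{E}[T_i] + (T_i-\mathbb{E}[T_i]) \le \mu^\star + (T_i-\mathbb{E}[T_i])$ holds pointwise for every $i$, taking the maximum over $i$ gives $\max_i T_i \le \mu^\star + \max_i (T_i-\mathbb{E}[T_i])$ pointwise. Next I would dominate the maximum of the centered variables by their $\ell_2$-norm: pointwise, $\max_i (T_i-\mathbb{E}[T_i]) \le \max_i |T_i-\mathbb{E}[T_i]| \le \big(\sum_{i=1}^m (T_i-\mathbb{E}[T_i])^2\big)^{1/2}$. Combining the two bounds, $\max_i T_i \le \mu^\star + \big(\sum_{i=1}^m (T_i-\mathbb{E}[T_i])^2\big)^{1/2}$ pointwise, and since the right-hand side has finite mean (its square has mean $\sum_i Var(T_i)<\infty$), taking expectations is legitimate and yields $\mathbb{E}[\max_i T_i] \le \mu^\star + \mathbb{E}\big[\big(\sum_i (T_i-\mathbb{E}[T_i])^2\big)^{1/2}\big]$.

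Then I would apply Jensen's inequality to the concave map $x\mapsto\sqrt{x}$ on $[0,\infty)$, obtaining $\mathbb{E}\big[\big(\sum_i (T_i-\mathbb{E}[T_i])^2\big)^{1/2}\big] \le \big(\mathbb{E}\big[\sum_i (T_i-\mathbb{E}[T_i])^2\big]\big)^{1/2} = \big(\sum_i Var(T_i)\big)^{1/2}$, the last equality being linearity of expectation together with the definition of variance. Finally I would bound $\sum_i Var(T_i) \le m\max_i Var(T_i) = m(\sigma^\star)^2$, so that $\big(\sum_i Var(T_i)\big)^{1/2} \le \sqrt{m}\,\sigma^\star$; chaining the inequalities gives $\mathbb{E}[\max_i T_i] \le \mu^\star + \sqrt{m}\,\sigma^\star$, which is exactly the claim.

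I do not expect a genuine obstacle here: the argument is short and elementary. The only points worth flagging are bookkeeping ones — finiteness of means and variances is precisely what guarantees all expectations above are well defined, and the two lossy steps (replacing the maximum by the Euclidean norm, and $\sum_i Var(T_i)\le m\max_i Var(T_i)$) are exactly what produce the $\sqrt{m}$ factor. One could instead stop at the sharper $\mu^\star + \big(\sum_i Var(T_i)\big)^{1/2}$, but the stated form is all that is needed in the sequel.
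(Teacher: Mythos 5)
Your argument is correct. Note that the paper itself does not prove this proposition---it is imported verbatim from the cited reference \cite{devroye1979inequalities}---so there is no in-paper proof to compare against; your derivation is essentially the classical one behind that reference. Each step checks out: the pointwise bound $\max_i T_i \le \max_i \mathbb{E}[T_i] + \max_i (T_i - \mathbb{E}[T_i])$, the domination of the sup-norm of the centered vector by its Euclidean norm, Jensen's inequality for $x \mapsto \sqrt{x}$, and finally $\sum_i Var(T_i) \le m \max_i Var(T_i)$. You correctly observe that no independence is required and that finiteness of means and variances is exactly what makes $\mathbb{E}[\max_i T_i]$ and the intermediate expectations well defined (e.g.\ $\max_i T_i$ is sandwiched between $T_1$ and $\sum_i |T_i|$, both integrable). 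Your remark that one could retain the sharper intermediate bound $\max_i \mathbb{E}[T_i] + \bigl(\sum_i Var(T_i)\bigr)^{1/2}$ is also accurate, though the weaker $\sqrt{m}\max_i\sqrt{Var(T_i)}$ form is what the paper uses in Lemma~\ref{lma5.7} and Theorems~\ref{thm5.1} and~\ref{genthm}.
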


In the following Lemma, we give a upper bound for $H$ in \cref{lma5.3}.

\begin{lemma} \label{lma5.7}
For every $s\in\{0,1,\cdots\}$, the expected total processing time of $\mathcal{J}(s)$ has upper bound $(2\log{m}+1)(1+\sqrt{m}\Delta)$, where $\Delta$ is the upper bound of squared coefficient of variation of all processing times, and $m$ is the number of servers. Namely, the upper bound on the expected time to process a stochastic co-flow with expected efficient size less or equal to $1$ is given as
\begin{equation*}
H= (2\log{m}+1)(1+\sqrt{m}\Delta).
\end{equation*}
\end{lemma}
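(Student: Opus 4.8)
The plan is to combine three facts already in hand: the GLJD bound $\sum_{l=1}^{l_s}\max_{(i,j)\in{\bf X}(s)^l}{\bf D}(s)_{i,j}\le 2\log m+1$ from \cref{prop5.5} (which applies because the proof of \cref{lma5.3} shows that $\mathcal{J}(s)$ has efficient size at most $1$), the fact that the coefficient of variation does not increase under sums of independent random variables (\cref{var}), and Devroye's maximal inequality (\cref{prop5.6}). First I would fix $s$ and recall that Step~3 of \cref{algo:npscs} merges all flows $(i,j,k)$ with $t(i,j,k)=s$ and common endpoints $(i,j)$ into a single flow whose random size is $\tilde S_{i,j}\triangleq\sum_{k:\,t(i,j,k)=s}S_{i,j,k}$, so that $\mathbb{E}[\tilde S_{i,j}]={\bf D}(s)_{i,j}$; the expected total processing time of $\mathcal{J}(s)$ under NPSCS is then $\sum_{l=1}^{l_s}\mathbb{E}\big[\max_{(i,j):\,{\bf X}(s)^l_{i,j}=1}\tilde S_{i,j}\big]$, since the matched links of a pseudo-permutation matrix ${\bf X}(s)^l$ run in parallel and the matching is done when its slowest link finishes.

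Next I would bound each term. Since the $S_{i,j,k}$ are independent and each has coefficient of variation at most $\Delta$, \cref{var} gives that $\tilde S_{i,j}$ has coefficient of variation at most $\Delta$, i.e. $\sqrt{Var(\tilde S_{i,j})}\le\Delta\,{\bf D}(s)_{i,j}$. Because a pseudo-permutation matrix has at most $m$ nonzero entries and the corresponding $\tilde S_{i,j}$ are independent, \cref{prop5.6} yields $\mathbb{E}\big[\max_{(i,j)\in{\bf X}(s)^l}\tilde S_{i,j}\big]\le\max_{(i,j)\in{\bf X}(s)^l}{\bf D}(s)_{i,j}+\sqrt{m}\,\Delta\,\max_{(i,j)\in{\bf X}(s)^l}{\bf D}(s)_{i,j}=(1+\sqrt{m}\Delta)\max_{(i,j)\in{\bf X}(s)^l}{\bf D}(s)_{i,j}$. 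Summing over $l\in\{1,\dots,l_s\}$ and applying \cref{prop5.5} then bounds the expected total processing time of $\mathcal{J}(s)$ by $(1+\sqrt{m}\Delta)\sum_{l=1}^{l_s}\max_{(i,j)\in{\bf X}(s)^l}{\bf D}(s)_{i,j}\le(1+\sqrt{m}\Delta)(2\log m+1)$, which is precisely the claimed value of $H$.

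I expect the main obstacle to be bookkeeping rather than any deep estimate: I need to argue cleanly that processing ${\bf X}(s)^l$ occupies exactly $\max_{(i,j)\in{\bf X}(s)^l}\tilde S_{i,j}$ time slots with these link sizes independent, and to reconcile the ``$\max$ over $(i,j,k)$'' that appears in \cref{propgljd}/\cref{prop5.5} with the ``$\max$ over merged entries ${\bf D}(s)_{i,j}$'' used above — both are legitimate because GLJD is run directly on the matrix ${\bf D}(s)$, whose entries are exactly the expected merged sizes. A secondary, purely cosmetic point is keeping the coefficient-of-variation normalization consistent (the quantity the paper denotes $\Delta$), which is handled by invoking \cref{var} on the $\tilde S_{i,j}$ directly.
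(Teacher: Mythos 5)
Your proposal is correct and follows essentially the same route as the paper's own proof: establish that $\mathcal{J}(s)$ has efficient size at most $1$, apply the GLJD guarantee (\cref{prop5.5}) to bound $\sum_l \max \mathbb{E}[\cdot]$ by $2\log m+1$, and then lift this to $\sum_l \mathbb{E}[\max(\cdot)]$ via \cref{prop5.6} using the coefficient-of-variation bound. If anything, you are more careful than the paper, since you explicitly introduce the merged sizes $\tilde S_{i,j}$ and invoke \cref{var} to control their coefficient of variation, a step the paper's proof of this lemma leaves implicit (it is only signposted in the text preceding \cref{var}).
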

\begin{proof}
The set of stochastic flows $\mathcal{J}(s)$ for any $s\in\{0,1,\cdots\}$, is a stochastic co-flow with efficient size less or equal to one:
\begin{equation*}
\max_{(i,j,k)\in\sigma(s)}\Big{\{}\max_i\sum_j \mathbb{E}[S_{i,j,k}],\max_j\sum_i \mathbb{E}[S_{i,j,k}]\Big{\}}\leq 1.
\end{equation*}

We use their representative matrice ${\bf D}(s)$ ($\forall s\in\{0,1,\cdots\}$) for GLJD algorithm. Suppose we get $l_s$ co-flows ${\bf X}(s)^1,\cdots,{\bf X}(s)^{l_s}$ with perfect matchings between the servers. From \cref{prop5.5}, 
\begin{equation}\label{f1}
\sum_{l=1}^{l_s}\max_{(i,j,k) \in {\bf X}(s)^l}\mathbb{E}[S_{i,j,k}]\leq 2\log m+1.
\end{equation}

However, the ultimate processing time for co-flow $\mathcal{J}(s)$ is $\mathbb{E}[\max_{(i,j,k) \in {\bf X}(s)^l}S_{i,j,k}]$. From Proposition \ref{prop5.6}, the expected total processing time for all co-flows $\{{\bf X}(s)^1,\cdots,{\bf X}(s)^{l_s}\}$ is
\begin{equation}\label{f2}
\sum_{l=1}^{l_s}\mathbb{E}[\max_{(i,j,k) \in X^l}S_{i,j,k}]\leq \sum_{l=1}^{l_s}\max_{(i,j,k) \in X^l}\mathbb{E}[S_{i,j,k}](\big{(}1+\sqrt{m}\Delta).
\end{equation}

Combining the inequalities \eqref{f1}-\eqref{f2}, the expected processing time of stochastic co-flow $\mathcal{J}(s)$ is at most
\begin{eqnarray*}
&&\sum_{l=1}^{l_s}\mathbb{E}[\max_{(i,j,k) \in {\bf X}(s)^l}S_{i,j,k}]\\
&\leq & \sum_{l=1}^{l_s}\max_{(i,j,k) \in{\bf X}(s)^l}\mathbb{E}[S_{i,j,k}](\big{(}1+\sqrt{m}\Delta)\\
&\leq & (2\log m+1)(1+\sqrt{m}\Delta),
\end{eqnarray*}
which proves the result as in the statement of the Lemma.
\end{proof}

Having the required results, we will now prove \cref{thm5.1}.
\begin{proof}
Recall that $y_{i,j,k,t}$ is the probability of starting flow $(i,j,k)$ at time slot $t$. The expected completion time of flow $(i,j,k)$: $\mathbb{E}[C_{i,j,k}]$ is $\sum_{t\in\mathbb{Z}_{\geq0}} y_{i,j,k,t}(t+\mathbb{E}[S_{i,j,k}])$.

\begin{eqnarray*}
&&\mathbb{E}[C_{i,j,k}]\\
&\overset{(a)}{=}&\sum_{s\in\mathbb{Z}_{\geq 0}}\mathbb{E}[C_{i,j,k}|t(i,j,k)=s]Pr[t(i,j,k)=s]\\
&\overset{(b)}{\leq}&\sum_{s\in\mathbb{Z}_{\geq 0}}\Big{(}H(s+\frac{1}{2})+\mathbb{E}[S_{i,j,k}]\Big{)}Pr[t(i,j,k)=s]\\
&\overset{(c)}{\leq}&\sum_{s\in\mathbb{Z}_{\geq 0}}\Big{(}H(s+\frac{1}{2})+\mathbb{E}[S_{i,j,k}]\Big{)}\sum_{t=0}^s y_{i,j.k.t}\frac{p_{i,j,k,s-t}}{\mathbb{E}[S_{i,j,k}]}\\
&\overset{(d)}{\leq}&\sum_{r\in\mathbb{Z}_{\geq 0}}\sum_{t\in\mathbb{Z}_{\geq 0}} \Big{(}H(t+r+\frac{1}{2})+\mathbb{E}[S_{i,j,k}]\Big{)}y_{i,j,k,t}\frac{p_{i,j,k,r}}{\mathbb{E}[S_{i,j,k}]}\\
&\overset{(e)}{\leq}& H\sum_{r\in\mathbb{Z}_{\geq 0}}\sum_{t\in\mathbb{Z}_{\geq 0}} \Big{(}t+r+\frac{1}{2}+\mathbb{E}[S_{i,j,k}]\Big{)}y_{i,j,k,t}\frac{p_{i,j,k,r}}{\mathbb{E}[S_{i,j,k}]}\\
&\overset{(f)}{\leq}& H\sum_{t\in\mathbb{Z}_{\geq 0}}y_{i,j,k,t}\Big{(}t+\mathbb{E}[S_{i,j,k}]+\sum_{r\in\mathbb{Z}_{\geq 0}}(r+\frac{1}{2})\frac{p_{i,j,k,r}}{\mathbb{E}[S_{i,j,k}]}\Big{)}\\
&\overset{(g)}{=}& H\sum_{t\in\mathbb{Z}_{\geq 0}}y_{i,j,k,t}\Big{(}t+\mathbb{E}[S_{i,j,k}]+\frac{1+{\mathbb{CV}[S_{i,j,k}]^2}}{2}\mathbb{E}[S_{i,j,k}]\Big{)}\\
&\overset{(h)}{=}& H\sum_{t\in\mathbb{Z}_{\geq 0}}y_{i,j,k,t}\Big{(}t+\frac{3+{\mathbb{CV}[S_{i,j,k}]^2}}{2}\mathbb{E}[S_{i,j,k}]\Big{)}\\
&\overset{(i)}{\leq}& H\sum_{t\in\mathbb{Z}_{\geq 0}}y_{i,j,k,t}\frac{3+{\mathbb{CV}[S_{i,j,k}]^2}}{2}\big{(}t+\mathbb{E}[S_{i,j,k}]\big{)}\\
&\overset{(j)}{\leq}& H\sum_{t\in\mathbb{Z}_{\geq 0}}y_{i,j,k,t}\frac{3+\Delta}{2}\big{(}t+\mathbb{E}[S_{i,j,k}]\big{)}\\
&\overset{(k)}{\leq}& H\frac{3+\Delta}{2}C_{i,j,k}^{LP}\\
&\overset{(l)}{\leq}& (2\log{m}+1)(1+\sqrt{m}\Delta)\frac{3+\Delta}{2}C_{i,j,k}^{LP},
\end{eqnarray*}
where $\Delta$ is the upper bound of $\mathbb{CV}[S_{i,j,k}]^2$ for all $i,j\in\mathcal{M}$, $k\in\{1,\cdots,N\}$. The above steps hold because of the following. $(a)$ is uncondtioning expectation
\begin{equation*}
\mathbb{E}[X]=\sum_{y}\mathbb{E}[X|Y=y]Pr[Y=y],
\end{equation*}
 $(b)$ follows from Corollary \ref{cor}, $(c)$ follows from \eqref{7}, $(d)$ sets $r=s-t$, $(e)$ extracts $H$ outside since $H\geq 1$, $(f)$ exchanges the summation order of $s$ and $t$, $(g)$ follows from \cref{prop5.4}, $(h)$ combines the two terms with $\mathbb{E}[S_{i,j,k}]$ together, $(i)$ extracts $\frac{3+{\mathbb{CV}[S_{i,j,k}]^2}}{2}\ge 1$ out, $(j)$ follows from the notation that $\Delta$ is the upper bound of all squared coefficient of variation of all variables $S_{i,j,k}$ $\forall i,j\in\mathcal{M},k\in\{1,\cdots,N\}$, $(k)$ follows from \eqref{5}, and $(l)$ follows from Lemma \ref{lma5.7}.

Since $\mathbb{E}[C_k]=\mathbb{E}[\max_{i,j\in[m]}C_{(i,j,k)}]$, note that $\{(i,j):i,j\in\{1,\cdots,m\}\}$ contains $m^2$ elements, applying Proposition \ref{prop5.6}, we have
\begin{equation*}
\mathbb{E}[C_k]\leq C_k^{LP}(2\log{m}+1)(1+\sqrt{m}\Delta)(1+m{\Delta}){(3+\Delta)}/{2}.
\end{equation*}
This proves the result as in the statement of Theorem \ref{thm5.1}.
\end{proof}

\begin{rem}
	We note that when the flow sizes are deterministic, the result in the statement of Theorem \ref{thm5.1} can be used with $\Delta=0$. 
\end{rem}

\section{Results for general release times}
\label{sec:release}

So far, we assumed that the co-flow tasks were released at time zero. If flow $(i,j,k)$ has release time $r_{i,j,k}$, it has zero possibility to be processed before $r_{i,j,k}$. We can simply add a set of constraints to LP problem \eqref{1}-\eqref{6}:
\begin{align*}
y_{i,j,k,t}=0 \qquad\qquad \forall i,j\in\mathcal{M},k\in\{1,\cdots,N\},t<r_{i,j,k}.
\end{align*}
We propose the same LP based algorithm, NPSCS, for this general case with the above additional constraint in the linear program. With this modification, the following result gives the approximation result for general release times.

\begin{theorem}
With release times constraints, the completion time of the $k$-th coflow tasks under scheduling algorithm NPSCS is at most $C_k^{LP}(2\log{m}+1)(1+\sqrt{m}\Delta)(1+m{\Delta})(2+\Delta)$, where $\Delta$ is the upper bound of $\mathbb{CV}[S_{i,j,k}]^2$ for all $i,j\in \mathcal{M}$, $k\in\{1,\cdots,N\}$. $\mathbb{CV}[S_{i,j,k}]^2 \triangleq (\mathbb{E}[S_{i,j,k}^2]-\mathbb{E}[S_{i,j,k}]^2)/\mathbb{E}[S_{i,j,k}]^2$ is the squared coefficient of variation of $S_{i,j,k}$.\label{genthm}
\end{theorem}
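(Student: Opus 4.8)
The plan is to follow the proof of \cref{thm5.1} almost verbatim, since the algorithm NPSCS is unchanged and the only new ingredient is the extra linear constraint $y_{i,j,k,t}=0$ for $t<r_{i,j,k}$. This constraint has two consequences I would exploit. First, the truncation argument of \cref{lma0} only gets easier, because the added constraint merely deletes variables, so we may again work with a finite time-indexed LP. Second, in every optimal solution $y_{i,j,k,t}^{LP}$ we have $y_{i,j,k,t}^{LP}=0$ for $t<r_{i,j,k}$, hence the random $t$ drawn in Step~2 of \cref{algo:npscs} satisfies $t\ge r_{i,j,k}$ almost surely, so the tentative start time $t(i,j,k)=t+r$ is always $\ge r_{i,j,k}$; also $C_{i,j,k}^{LP}=\sum_t y_{i,j,k,t}^{LP}(t+\mathbb{E}[S_{i,j,k}])\ge r_{i,j,k}+\mathbb{E}[S_{i,j,k}]$. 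Lemmas~\ref{var} and \ref{lma5.7}, Corollary~\ref{prop5.5}, and Propositions~\ref{prop5.4}, \ref{prop5.6}, \ref{propgljd} all carry over unchanged; in particular $H=(2\log m+1)(1+\sqrt m\Delta)\ge 1$ still upper bounds the expected processing time of each group $\mathcal{J}(s)$.

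The one lemma that must be revised is \cref{lma5.3} (together with \cref{cor}), because with release times NPSCS may be forced to idle a server while it waits for an enqueued flow to be released. I would show that this costs only an additive $t(i,j,k)$. The order $\Gamma$ is unchanged, so the counting argument of \cref{lma5.3} still gives that the expected \emph{processing} time scheduled before a flow $(i,j,k)$ with $t(i,j,k)=s$ is at most $(s+\frac{1}{2})H$; and the total \emph{idle} time accumulated before $(i,j,k)$ begins is at most $s$, because every flow that $\Gamma$ places before or simultaneously with $(i,j,k)$ has release time at most its tentative start time, which is at most $s$, and because the elapsed-time clock is monotone the idle intervals are pairwise disjoint subintervals of $[0,s]$. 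Hence $\mathbb{E}[C_{i,j,k}\mid t(i,j,k)=s]\le (s+\frac{1}{2})H + s + \mathbb{E}[S_{i,j,k}]$, which is the only change from \cref{cor}.

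With this in hand I would rerun the chain of inequalities (a)--(l) in the proof of \cref{thm5.1}. Step (b) now produces an extra additive $s=t+r$, which after step (d) (setting $r=s-t$) appears as $t+r$ alongside $H(t+r+\frac{1}{2})+\mathbb{E}[S_{i,j,k}]$. Using $H\ge1$ to write $(t+r)\le H(t+r)$ and $\mathbb{E}[S_{i,j,k}]\le H\mathbb{E}[S_{i,j,k}]$, the summand becomes $H\big(2(t+r)+\frac{1}{2}+\mathbb{E}[S_{i,j,k}]\big)$; then \cref{prop5.4} applied to the $r$-sum (after bounding $2r+\frac{1}{2}\le 2(r+\frac{1}{2})$) turns the bound into $H\sum_t y_{i,j,k,t}\big(2t+(2+\mathbb{CV}[S_{i,j,k}]^2)\mathbb{E}[S_{i,j,k}]\big)$. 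Since $2t+(2+\mathbb{CV}[S_{i,j,k}]^2)\mathbb{E}[S_{i,j,k}]=2(t+\mathbb{E}[S_{i,j,k}])+\mathbb{CV}[S_{i,j,k}]^2\mathbb{E}[S_{i,j,k}]\le (2+\Delta)(t+\mathbb{E}[S_{i,j,k}])$, using $\mathbb{CV}[S_{i,j,k}]^2\le\Delta$ and $\mathbb{E}[S_{i,j,k}]\le t+\mathbb{E}[S_{i,j,k}]$, we obtain $\mathbb{E}[C_{i,j,k}]\le H(2+\Delta)C_{i,j,k}^{LP}=(2\log m+1)(1+\sqrt m\Delta)(2+\Delta)C_{i,j,k}^{LP}$. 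Finally $\mathbb{E}[C_k]=\mathbb{E}[\max_{i,j\in\mathcal{M}}C_{i,j,k}]$, and the same application of \cref{prop5.6} over the $m^2$ flows as in \cref{thm5.1} (bounding $\sqrt{Var(C_{i,j,k})}\le\Delta\,\mathbb{E}[C_{i,j,k}]$ via \cref{var}) contributes the factor $(1+m\Delta)$, giving the claimed bound $C_k^{LP}(2\log m+1)(1+\sqrt m\Delta)(1+m\Delta)(2+\Delta)$.

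The main obstacle is the revised \cref{lma5.3}: one must verify carefully that release-induced waiting inflates the start time of $(i,j,k)$ by no more than the additive $t(i,j,k)$ (the disjoint-intervals argument above), and then check that this extra $t+r$ is absorbed \emph{exactly} enough --- via $H\ge1$ and $\mathbb{E}[S_{i,j,k}]\le C_{i,j,k}^{LP}$ --- to replace the factor $(3+\Delta)/2$ of \cref{thm5.1} by $(2+\Delta)$, rather than by a weaker constant. Everything else is the same bookkeeping as in the zero-release case.
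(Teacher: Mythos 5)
Your proposal is correct and follows essentially the same route as the paper: the release-time constraint forces $t(i,j,k)\ge r_{i,j,k}$, the start time is bounded by $s+(s+\tfrac12)H\le(2s+\tfrac12)H$ using $H\ge1$, and the same chain of inequalities with Proposition~\ref{prop5.4} and Proposition~\ref{prop5.6} yields the factor $(2+\Delta)(1+m\Delta)$ in place of $\tfrac{3+\Delta}{2}(1+m\Delta)$. Your disjoint-idle-intervals justification of the additive $s$ is in fact slightly more careful than the paper's one-line statement, and you correctly treat the paper's step $(j)$ as an inequality rather than an equality.
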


\begin{proof}
If a flow $(i,j,k)$ has release time more than $s$, then $y_{i,j,k,t}=0$ for all $t\leq s$. The tentative time generated cannot be less or equal to $s$. In other word, a flow with tentative start time $s$ has release time less or equal to $s$. The expected completion time of a flow with tentative start time $s$ is less than the summation of its release time (less than $s$) and the expected total processing time of flows before ($(s+1/2)H$) by Lemma \ref{lma5.3}.

Since $H\geq 1$, we have 
\begin{eqnarray}\label{8}
&&\mathbb{E}[C_{i,j,k}|t(i,j,k)=s]\nonumber\\
&\leq & s+(s+\frac{1}{2})H+\mathbb{E}[S_{i,j,k}]\nonumber\\
&=&(2s+\frac{1}{2})H+\mathbb{E}[S_{i,j,k}],
\end{eqnarray}

Further, we have
\begin{eqnarray*}
&&\mathbb{E}[C_{(i,j,k)}]\\
&\overset{(a)}{=}&\sum_{s\in\mathbb{Z}_{\geq 0}}\mathbb{E}[C_{(i,j,k)}|t(i,j,k)=s]Pr[t(i,j,k)=s]\\
&\overset{(b)}{\leq}&\sum_{s\in\mathbb{Z}_{\geq 0}}\Big{(}H(2s+\frac{1}{2})+\mathbb{E}[S_{i,j,k}]\Big{)}Pr[t(i,j,k)=s]\\
&\overset{(c)}{\leq}&\sum_{s\in\mathbb{Z}_{\geq 0}}\Big{(}H(2s+\frac{1}{2})+\mathbb{E}[S_{i,j,k}]\Big{)}\sum_{t=0}^s y_{i,j,k,t}\frac{p_{i,j,k,s-t}}{\mathbb{E}[S_{i,j,k}]}\\
&\overset{(d)}{\leq}&\sum_{r\in\mathbb{Z}_{\geq 0}}\sum_{t\in\mathbb{Z}_{\geq 0}} \Big{(}H(2t+2r+\frac{1}{2})+\mathbb{E}[S_{i,j,k}]\Big{)}y_{i,j,k,t}\frac{p_{i,j,k,r}}{\mathbb{E}[S_{i,j,k}]}\\
&\overset{(e)}{\leq}&H\sum_{r\in\mathbb{Z}_{\geq 0}}\sum_{t\in\mathbb{Z}_{\geq 0}} \Big{(}2t+2r+\frac{1}{2}+\mathbb{E}[S_{i,j,k}]\Big{)}y_{i,j,k,t}\frac{p_{i,j,k,r}}{\mathbb{E}[S_{i,j,k}]}\\
&\overset{(f)}{\leq}& H\sum_{t\in\mathbb{Z}_{\geq 0}}y_{i,j,k,t}\Big{(}2t+\mathbb{E}[S_{i,j,k}]+\sum_{r\in\mathbb{Z}_{\geq 0}}(2r+\frac{1}{2})\frac{p_{i,j,k,r}}{\mathbb{E}[S_{i,j,k}]}\Big{)}\\
&\overset{(g)}{\leq}& H\sum_{t\in\mathbb{Z}_{\geq 0}}y_{i,j,k,t}\Big{(}2t+\mathbb{E}[S_{i,j,k}]+2\sum_{r\in\mathbb{Z}_{\geq 0}}(r+\frac{1}{2})\frac{p_{i,j,k,r}}{\mathbb{E}[S_{i,j,k}]}\Big{)}\\
&\overset{(h)}{=}& H\sum_{t\in\mathbb{Z}_{\geq 0}}y_{i,j,k,t}\Big{(}2t+\mathbb{E}[S_{i,j,k}]+\big{(}1+{\mathbb{CV}[S_{i,j,k}]^2}\big{)}\mathbb{E}[S_{i,j,k}]\Big{)}\\
&\overset{(i)}{=}& H\sum_{t\in\mathbb{Z}_{\geq 0}}y_{i,j,k,t}\Big{(}2t+\big{(}2+{\mathbb{CV}[S_{i,j,k}]^2}\big{)}\mathbb{E}[S_{i,j,k}]\Big{)}\\
&\overset{(j)}{=}& H\sum_{t\in\mathbb{Z}_{\geq 0}}y_{i,j,k,t}\big{(}2+{\mathbb{CV}[S_{i,j,k}]^2}\big{)}\Big{(}t+\mathbb{E}[S_{i,j,k}]\Big{)}\\
&\overset{(k)}{\leq}& H\sum_{t\in\mathbb{Z}_{\geq 0}}y_{i,j,k,t}\big{(}2+{\Delta}\big{)}\big{(}t+\mathbb{E}[S_{i,j,k}]\big{)}\\
&\overset{(l)}{\leq}& H({2+\Delta})C_{(i,j,k)}^{LP}\\
&\overset{(m)}{\leq}& (2\log{m}+1)(1+\sqrt{m}\Delta)({2+\Delta})C_{(i,j,k)}^{LP},
\end{eqnarray*}
where $\Delta$ is the upper bound of $\Delta_{ij}$ for all $i,j\in [m]$, $m$ is the number of servers. The steps above can be explained as follows. $(a)$ is uncondtioning expectation
\begin{equation*}
\mathbb{E}[X]=\sum_{y}\mathbb{E}[X|Y=y]Pr[Y=y],
\end{equation*}
 $(b)$ follows from \eqref{8}, $(c)$ follows from \eqref{7}, $(d)$ sets $r=s-t$, $(e)$ extracts $H$ out since $H\geq 1$, $(f)$ exchanges the summation order of $s$ and $t$, $(g)$ extract $2$ out, $(h)$ follows from Proposition \ref{prop5.4}, $(i)$ combines the two terms having $\mathbb{E}[S_{i,j,k}]$, $(j)$ follows from the notation that $\Delta$ is the upper bound of all squared coefficient of all variables $S_{i,j,k}$ $\forall i,j\in\mathcal{M},k\in\{1,\cdots,N\}$, $(k)$ follows from \eqref{5}, and $(l)$ follows from Lemma \ref{lma5.7}.

Since $\mathbb{E}[C_k]=\mathbb{E}[\max_{i,j\in[m]}C_{(i,j,k)}]$, applying Proposition \ref{prop5.6}, we have
\begin{equation*}
\mathbb{E}[C_k]\leq C_k^{LP}(2\log{m}+1)(1+\sqrt{m}\Delta)(1+m{\Delta}){(2+\Delta)}.
\end{equation*}
This proves the result as in the statement of the Theorem.
\end{proof}

We also note that Theorem \ref{lma0} can also be easily extended with general release times, by changing 
\begin{equation}
F_1\triangleq mN\Big{(}\max_{k\in\{1,\cdots,N\}, i\in {\cal M}, j \in {\cal M}}r_{i,j,k}+\sum_{i\in\mathcal{M}}\sum_{j\in\mathcal{M}}\sum_{k=1}^N\mathbb{E}[S_{i,j,k}]\Big{)}.
\end{equation}
Thus, the number of time slots can be truncated, yielding a polynomial time algorithm.

\begin{rem}
	We note that when the flow sizes are deterministic, the result in the statement of Theorem \ref{genthm} can be used with $\Delta=0$. 
\end{rem}

\section{Conclusions}
\label{sec:con}
This paper studies stochastic non-preemptive co-flow scheduling, and gives an approximation algorithm. The results are provided for both zero and general release times. The results can also be specialized to deterministic co-flow scheduling.

\bibliographystyle{IEEEtran}
\bibliography{refs}

\end{document}